\numberwithin{equation}{section}
\theoremstyle{plain}
\newtheorem{thm}{Theorem}[section]
\newtheorem{defi}[thm]{Definition}
\newtheorem{lem}[thm]{Lemma}
\theoremstyle{remark}
\newtheorem{rema}[thm]{Remark}
\newcommand{\Z}{\mathbb{Z}}
\newcommand{\C}{\mathbb{C}}
\newcommand{\V}{\mathbb{V}}
\newcommand{\ket}[1]{\left|#1\right\rangle}      % Ket-Zustand
\newcommand{\bra}[1]{\left\langle #1\right|}     % Bra-Zustand
\newcommand{\PROD}[3]{\mathop{\overrightarrow\prod}\limits_{#1 \le #2 \le #3 }}
\newcommand{\iPROD}[3]{\mathop{\overleftarrow\prod}\limits_{#1 \le #2 \le #3 }}
\newcommand{\hypref}[2]{\ifx\href\asklfhas #2\else\href{#1}{#2}\fi}
\newcommand{\Secref}[1]{Section~\ref{#1}}
\renewcommand{\eqref}[1]{(\ref{#1})}
\def\[{\begin{equation}}
\def\]{\end{equation}}
\def\<{\begin{eqnarray}}
\def\>{\end{eqnarray}}
\title[]{Continuous representations of scalar products \\ of  Bethe vectors}
\author{W. Galleas}
\address{II. Institut f\"ur Theoretische Physik, Universit\"at Hamburg, Luruper Chaussee 149, 22761 Hamburg, Germany.}
\email{wellington.galleas@desy.de}
\thanks{The work of W.G. is supported by the German Science Foundation (DFG) under the Collaborative Research Center (SFB) 676: Particles, Strings and the Early Universe.}
\subjclass[2000]{}
\begin{document}

\vspace{2em}
\keywords{six-vertex model, Bethe vectors, scalar products, functional equations, determinantal representation}
%%%%%%%%%%%%%%%%%%%%%%%%%%%%%%%
\begin{abstract}
We present families of single determinantal representations of \emph{on-shell} scalar products of Bethe vectors. Our families of 
representations are parameterized by a continuous complex variable which can be fixed at convenience. Here we consider Bethe vectors
in two versions of the six-vertex model: the case with boundary twists and the case with open boundaries.
\end{abstract}
%%%%%%%%%%%%%%%%%%%%%%%%%%%%%%%

\maketitle
\vspace{-8cm}
\begin{flushright}
{\footnotesize ZMP-HH/16-16}
\end{flushright}
\vspace{10cm}

\setcounter{tocdepth}{2}
% \tableofcontents

%%%%%%%%%%%%%%%%%%%%%%%%%%%%%%%%%%%%%%%%%%
\section{Introduction} \label{sec:INTRO}
%%%%%%%%%%%%%%%%%%%%%%%%%%%%%%%%%%%%%%%%%%%

Integrability in the sense of Liouville is essentially based on the possibility of solving exactly equations of motion arising
in  classical Hamiltonian systems. However, the advent of \emph{quantum mechanics} has changed the scenario and physical quantities 
associated with several important systems are not \emph{a priori} described by differential and/or difference equations. For instance, this is the case in 
the isotropic Heisenberg spin-chain, or \textsc{xxx} model for short \cite{Heisenberg_1928}.

%%%%%%%%%%%%%%%%%%%%%%%%%%%%%%%%%%%%%%%%%%
\subsection{Bethe vectors} 
%%%%%%%%%%%%%%%%%%%%%%%%%%%%%%%%%%%%%%%%%%%
The one-dimensional \textsc{xxx} model with nearest-neighbor interactions and periodic boundary conditions was solved by
Bethe in the seminal work \cite{Bethe_1931}. Bethe's solution involves an \emph{ansatz} for the model's eigenvectors containing
free parameters; which are then fixed by the hamiltonian eigenvalue problem. The method put forward by Bethe was found suitable for a
large number of spin-chain hamiltonians and it is nowadays widely known as \emph{Bethe ansatz}. The constraints fixing the ansatz parameters
are in their turn referred to as \emph{Bethe ansatz equations}. 

Interestingly, the technique pioneered by Bethe is not limited to spin-chain hamiltonians and it has been used for solving a variety of
problems. For instance, in \cite{Lieb_1967} Lieb showed that Bethe's method can also be used for diagonalizing the transfer matrix of cases
of the six-vertex model. Lieb's analysis was further generalized in \cite{Lieb_1967a, Lieb_1967b, Lieb_1967c, Sutherland_1967}; and in all cases
the transfer matrices eigenvectors were found to coincide with the ones diagonalizing the anisotropic Heisenberg chain hamiltonian \cite{Yang_Yang_1966}.
The latter is also known as \textsc{xxz} model and it was proposed by Bloch as model for remanent magnetization \cite{Bloch_1930, Bloch_1932}.
At first it was not clear why eigenvectors of the six-vertex and \textsc{xxz} models would coincide and this was only clarified by McCoy and
Wu in \cite{McCoy_Wu_1968}. More precisely, McCoy and Wu showed that the transfer matrix of the six-vertex model formulated in 
\cite{Sutherland_1967} commutes with the hamiltonian of the \textsc{xxz} spin-chain.
This property was further developed by Baxter culminating in the concept of \emph{commuting transfer matrices} \cite{Baxter_1971f, Baxter_1971}. 
In particular, the latter was shown to be achieved by local relations nowadays known as \emph{Yang-Baxter equation}.

The existence of commuting transfer matrices has also led to further insights into the structure of models exactly solvable by Bethe ansatz. For instance,
the relation between Bethe ansatz equations and commuting transfer matrices was then made manifest in \cite{Baxter_1971} through the so called \textsc{t-q}
relations. Subsequently, the algebraic structure underlying the construction of Bethe vectors was unveiled by the 
\emph{Quantum Inverse Scattering Method}  proposed in \cite{Sk_Faddeev_1979, Takh_Faddeev_1979}. Such construction of Bethe vectors
is also known as \emph{algebraic Bethe ansatz} and this is the formulation we shall consider in the present paper.

%%%%%%%%%%%%%%%%%%%%%%%%%%%%%%%%%%%%%%%%%%
\subsection{Scalar products} 
%%%%%%%%%%%%%%%%%%%%%%%%%%%%%%%%%%%%%%%%%%%

Within the framework of the algebraic Bethe ansatz one finds that eigenvectors of \emph{integrable} spin-chains can be built through the 
action of \emph{creation} operators on a \emph{pseudo-vacuum} vector. This construction is analogous to the solution of the quantum mechanical 
harmonic oscillator through the Heisenberg-Weyl algebra or the hydrogen atom through the $\mathfrak{so}(4)$ Lie algebra \cite{Pauli_1926, Fock_1935}. 
One advantage of this approach, when compared to Bethe's original methodology \cite{Bethe_1931}, is that it offers compact representations
for the model's eigenvectors. In particular, as far as the study of correlation functions are concerned, such algebraic representations of Bethe 
vectors have proved quite suitable. See for instance \cite{Korepin_book} and references therein. 

The computation of correlation functions does not necessarily require the model's eigenvectors. In fact, one only needs to evaluate 
certain expected values and the first quantity of interest is the scalar product of Bethe vectors \cite{Korepin_1982}. 
It is also customary in the literature  to distinguish Bethe vectors based on the conditions imposed on their Bethe ansatz parameters. 
Hence, we refer to the case where such parameters are free as \emph{off-shell Bethe vector}; whilst the case satisfying Bethe ansatz equations
is referred to as \emph{on-shell Bethe vector}. In this way one can compute scalar products between two off-shell vectors, two on-shell vectors; 
and also between one off-shell and one on-shell vector. Here we refer to those cases respectively as off-shell scalar products, norms and on-shell scalar products.

As far as the six-vertex model with diagonal boundary twists is concerned, off-shell scalar products have been studied in 
\cite{Korepin_1982, Galleas_SCP, deGier_Galleas_2011}. In particular, in the works \cite{Galleas_SCP, deGier_Galleas_2011} we have obtained compact 
expressions for such scalar products in terms of multiple contour integrals. Norms were computed in \cite{Korepin_1982, Korepin_book} proving Gaudin's 
hypothesis that such quantities were given by Jacobian determinants \cite{Gaudin_1972, Gaudin_McCoy_Wu_1981}. Interestingly, on-shell scalar products
can also be written as determinants and such formula was obtained by Slavnov in \cite{Slavnov_1989}.

%%%%%%%%%%%%%%%%%%%%%%%%%%%%%%%%%%%%%%%%%%
\subsection{Algebraic-functional approach} 
%%%%%%%%%%%%%%%%%%%%%%%%%%%%%%%%%%%%%%%%%%%
In the course of studying scalar products of Bethe vectors, Korepin has also introduced the so called six-vertex model with domain-wall boundary
conditions \cite{Korepin_1982}. The latter has found numerous applications ranging from enumerative combinatorics \cite{Kuperberg_1996} 
to the study of gauge theories \cite{Szabo_2012}. In particular, a set of conditions characterizing the partition function of the six-vertex 
model with domain-wall boundaries is among the results of \cite{Korepin_1982}. Alternatively, the same partition function can be studied through certain
functional equations originating from the Yang-Baxter algebra \cite{Galleas_2010, Galleas_proc, Galleas_2013, Galleas_2016a, Galleas_2016b}.
This is a result of the \emph{algebraic-functional method} originally proposed for spectral problems in \cite{Galleas_2008} and subsequently 
extended to such partition functions in \cite{Galleas_2010}. 

Scalar products of Bethe vectors can also be tackled through the algebraic-functional method. This was demonstrated in \cite{Galleas_SCP}
for the six-vertex model with boundary twists, and in \cite{Galleas_openSCP} for the case of open boundaries. In particular, in the works
\cite{Galleas_SCP, Galleas_openSCP} we have obtained a system of two functional equations determining such off-shell scalar products. 
Equations governing on-shell scalar products are also available and they have been discussed in \cite{Galleas_SCP, Galleas_Lamers_2015}.
One important feature of the equations describing the on-shell case is that they exhibit the same structure as the functional equation derived in
\cite{Galleas_2013, Galleas_proc} for the partition function of the six-vertex model with domain-wall boundaries. The importance of this result 
is two-fold: on the one hand it shows that several quantities of interest can be accommodated in such type of equation.
On the other hand, it paves the way for using the recently formulated method of \cite{Galleas_2016a, Galleas_2016b} and consequently
obtaining families of single determinantal representations for on-shell scalar products.

%%%%%%%%%%%%%%%%%%%%%%%%%%%%%%%%%%%%%%%%%%
\subsection{Goals of this paper} 
%%%%%%%%%%%%%%%%%%%%%%%%%%%%%%%%%%%%%%%%%%%
The approach put forward in \cite{Galleas_2016a, Galleas_2016b} for solving functional equations originating from the algebraic-functional
method depends mainly on the structure of the equation rather than on the particular form of its coefficients. In this way, one might expect it
can be employed for solving the equations governing on-shell scalar products presented in \cite{Galleas_SCP, Galleas_Lamers_2015}. Demonstrating 
the feasibility of our method for such equations is the main goal of the present paper. In particular, we intend to show that our approach yields more 
than one single determinantal representation for on-shell scalar products. As a matter of fact, it produces several families of continuous
determinantal representations. Here we refer to continuous determinantal representations as the determinant of a matrix depending 
non-trivially on a continuous complex variable, such that the computation of the determinant eliminates this dependence. Hence, this variable can be regarded
as a parameterization of our continuous families of representations since different choices of this parameter modifies the matrix we are considering but not 
the final scalar product. Moreover, in the present paper we shall consider on-shell scalar products associated with two types of six-vertex models:
the case having diagonal boundary twists and the case with diagonal open boundaries. Although both cases have already been considered in
the literature \cite{Slavnov_1989, Wang_2002, Kitanine_2007}, here we shall obtain a large number of continuous families of representations for
such quantities in a single step. In particular, our representations will be given by single determinants allowing for a straightforward derivation of the 
partial homogeneous limit.

%%%%%%%%%%%%%%%%%%%%%%%%%%%%%%%%%%%%%%%%%%
\subsection{Outline} 
%%%%%%%%%%%%%%%%%%%%%%%%%%%%%%%%%%%%%%%%%%%
The present paper is organized as follows. In \Secref{sec:TWIST} we consider the six-vertex model with boundary twists. We restrict ourselves
to presenting only the definitions required for the computation of on-shell scalar products through the algebraic-functional method. 
\Secref{sec:OPEN} is devoted to the six-vertex model with open boundary conditions. Similarly to \Secref{sec:TWIST}, we largely rely 
on results previously presented in the literature in \Secref{sec:OPEN}. Concluding remarks are then presented in \Secref{sec:CONCL}.

%%%%%%%%%%%%%%%%%%%%%%%%%%%%%%%%%%%%%%%%%%
\section{Six-vertex model: boundary twists} \label{sec:TWIST}
%%%%%%%%%%%%%%%%%%%%%%%%%%%%%%%%%%%%%%%%%%%

Vertex models consist of a collection of graphs embedded in a lattice, statistical weights for graphs configurations and boundary conditions.
Such formulation is in fact a generalization of the \emph{ice model} proposed by Pauling in 1935 in order to describe the ice residual 
entropy \cite{Pauling_1935}. Two-dimensional vertex models are the most notorious ones to date; and for such models the concept of \emph{integrability}
in the sense of Baxter is well established \cite{Baxter_book}. However, it is important to emphasize here that \emph{integrability} in this context does 
not refer to the possibility of solving exactly differential equations. Vertex models are systems of classical Statistical Mechanics, and in that case
one needs to evaluate the summation defining the model's partition function and other physical quantities in a closed form.
In the present paper we shall restrict our attention to a particular integrable vertex model, alias trigonometric six-vertex model. Several cases of the
trigonometric six-vertex model was firstly studied by Lieb \cite{Lieb_1967, Lieb_1967a, Lieb_1967b, Lieb_1967c} and its full version was put forward by
Sutherland in \cite{Sutherland_1967}. Definitions required for the construction of Bethe vectors in the six-vertex model with diagonal boundary twists
will be introduced in what follows.

%%%%%%%%%%%%%%%%%%%%%%%%%%%%%%%%%%%%%%%%%%
\subsection{Definitions and conventions} \label{sec:DEF_twist}
%%%%%%%%%%%%%%%%%%%%%%%%%%%%%%%%%%%%%%%%%%%
Here we shall also refer to the six-vertex model with boundary twists as toroidal six-vertex model. The literature devoted to this particular vertex model
is quite extensive and we restrict ourselves to describing briefly the mathematical objects required for the definition of Bethe vectors. 
Physical assumptions underlying the six-vertex model can be found in \cite{Lieb_1967, Baxter_book} and references therein. 
One interesting aspect of the toroidal six-vertex model is that it allows for an operatorial description leading to an algebraic formulation of Bethe 
vectors. Also, we stress our presentation will heavily rely on results previously obtained in \cite{Galleas_SCP, Galleas_openSCP, Galleas_Lamers_2015}.

\subsection{$\mathcal{U}_q [\widehat{\mathfrak{gl}_2}]$-intertwinner} The statistical weights of the integrable six-vertex model can be encoded in a 
matrix $\mathcal{R}$ intertwining tensor product representations of the $\mathcal{U}_q [\widehat{\mathfrak{gl}_2}]$ algebra. As for the six-vertex model
we consider the vector space $\mathbb{V} = \C v_0 \oplus \C v_{\bar{0}}$ with basis vectors $v_0, v_{\bar{0}} \in \C^2$. The intertwinner of interest
$\mathcal{R} \colon \C \to \text{End} ( \mathbb{V} \otimes \mathbb{V} )$ then reads
\[ \label{rmat}
\mathcal{R} (x) = \begin{pmatrix} a(x) & 0 & 0 & 0 \\ 0 & b(x) & c(x)  & 0 \\ 0 & c(x) & b(x) & 0 \\ 0 & 0 & 0 & a(x) \end{pmatrix} \; ,
\]
with non-null entries defined as $a(x) \coloneqq \sinh{(x + \gamma)}$, $b(x) \coloneqq \sinh{(x)}$ and $c(x) \coloneqq \sinh{(\gamma)}$. They are functions on complex parameters
$x$ and $\gamma$, respectively referred to as \emph{spectral} and \emph{anisotropy} parameters. Throughout this work $\gamma \in \C$ is fixed
and we omit its dependence on the LHS of \eqref{rmat}.

\subsection{Yang-Baxter algebra}
Let $\mathscr{A} (\mathcal{R})$ denote the algebra generated by the non-commutative matrix entries of $\mathcal{L} \in \text{End}(\V)$ satisfying
the relation
\[ \label{yba}
\mathcal{R}_{12} (x_1 - x_2) \; \mathcal{L}_1 (x_1) \mathcal{L}_2 (x_2) = \mathcal{L}_2 (x_2) \mathcal{L}_1 (x_1) \; \mathcal{R}_{12} (x_1 - x_2) 
\]
in $\text{End} (\V \otimes \V \otimes \V_{\mathcal{Q}})$. Relation \eqref{yba} considers the standard tensor leg notation. The algebra $\mathscr{A} (\mathcal{R})$
is usually referred to as Yang-Baxter algebra and its representations consist of a module $\V_{\mathcal{Q}}$ together with a meromorphic function
$\mathcal{L} (x)$ on $\C$ with values in $\text{End} (\V \otimes \V_{\mathcal{Q}})$. As for the six-vertex model, we have $\V \simeq \C^2$ and we write
\[ \label{abcd}
\mathcal{L} (x) \eqqcolon \begin{pmatrix} \mathcal{A}(x) & \mathcal{B}(x) \\ \mathcal{C}(x) & \mathcal{D}(x) \end{pmatrix} \; .
\]
In this way $\mathcal{A}(x)$, $\mathcal{B}(x)$, $\mathcal{C}(x)$ and $\mathcal{D}(x)$ are operators on $\C$ with values in $\text{End} (\V_{\mathcal{Q}})$.

\subsection{Modules over $\mathscr{A} (\mathcal{R})$} Representations of the Yang-Baxter algebra $\mathscr{A} (\mathcal{R})$ 
consist of pairs $(\V_{\mathcal{Q}} , \mathcal{L})$. The case formed by $\V_{\mathcal{Q}} = \V$ and $\mathcal{L} (x) = \mathcal{R}(x - \mu)$ is usually
called fundamental representation with evaluation point $\mu$. Next, we let $L \in \Z_{> 0}$ and look for $\mathscr{A} (\mathcal{R})$-modules 
with $\V_{\mathcal{Q}} = \V^{\otimes L}$. The latter can be easily obtained by noticing the following: if $(\V_{\mathcal{Q}}' , \mathcal{L}')$ and
$(\V_{\mathcal{Q}}'' , \mathcal{L}'')$ are $\mathscr{A} (\mathcal{R})$-modules, then so is $(\V_{\mathcal{Q}} , \mathcal{L})$ with 
$\V_{\mathcal{Q}} = \V_{\mathcal{Q}}' \otimes \V_{\mathcal{Q}}''$ and $\mathcal{L} (x) = \mathcal{L}' (x - x') \mathcal{L}'' (x - x'')$. Now write $\V_0 \simeq \V$ and let $\mathcal{T}_0 \in \text{End}(\V_0 \otimes \V^{\otimes L})$
be referred to as \emph{monodromy matrix}. Hence, one finds that the pair $(\V^{\otimes L} , \mathcal{T}_0 )$ is an $\mathscr{A} (\mathcal{R})$-module
with 
\[ \label{mono}
\mathcal{T}_0 (x) \coloneqq \PROD{1}{j}{L} \mathcal{R}_{0 j} (x - \mu_j) \; .
\]
The parameters $\mu_j \in \C$ in \eqref{mono} denote \emph{lattice inhomogeneities} and in what follows we shall consider the $\mathscr{A} (\mathcal{R})$-module
$(\V^{\otimes L} , \mathcal{T}_0 )$ built out of \eqref{mono} and \eqref{rmat}. Representations of the operators $\mathcal{A}$, $\mathcal{B}$, $\mathcal{C}$
and $\mathcal{D}$ are then obtained through the identification of \eqref{abcd} and \eqref{mono}.

\begin{rema}
Throughout this paper we fix the inhomogeneity parameters $\mu_j \in \C$ and omit them from the arguments of functions.
\end{rema}

\subsection{Bethe vectors and scalar products} 
Let $\mathfrak{h}$ denote the Cartan subalgebra of $\mathfrak{gl}_2$ and let $\V^{\otimes L}$ be a diagonalizable $\mathfrak{h}$-module. Then we say
an element $v \in \V^{\otimes L}$ has $\mathfrak{h}$-weight $\epsilon$ if $h  v = \epsilon  v$ for all $h \in \mathfrak{h}$. 
In addition to that we assign the weight $( \epsilon, \lambda_{\mathcal{A}}(x) , \lambda_{\mathcal{D}}(x) )$ to an element $v \in \V^{\otimes L}$
having $\mathfrak{h}$-weight $\epsilon$; and also satisfying $\mathcal{A} (x) v = \lambda_{\mathcal{A}}(x) v$, $\mathcal{D} (x) v = \lambda_{\mathcal{D}}(x) v$.
Here we restrict our attention to the case where $\lambda_{\mathcal{A}}$ and $\lambda_{\mathcal{D}}$ are non-vanishing meromorphic functions.
We then proceed by defining \emph{singular vectors} as the elements $v \in \V^{\otimes L}$ such that $\mathcal{C} (x) v = 0$ for generic $x \in \C$. 
Highest-weight modules are then formed by singular vectors $v \in \V^{\otimes L}$ exhibiting weight 
$( \epsilon, \lambda_{\mathcal{A}}(x) , \lambda_{\mathcal{D}}(x) )$. Hence, taking into account \eqref{mono} and \eqref{rmat}, one readily
finds that 
\[ \label{zero}
\ket{0} \coloneqq \left(\begin{matrix} 1 \\ 0 \end{matrix} \right)^{\otimes L}
\]
is a highest-weight vector. Bethe vectors $\ket{\Psi_n} \in \text{span}(\V^{\otimes L})$ for $1 \leq n \leq L$ are then defined as
\[ \label{bethe}
\ket{\Psi_n} \coloneqq \PROD{1}{i}{n} \mathcal{B} (x_i) \ket{0} \qquad x_i \in \C \; .
\]
It is worth remarking that \eqref{bethe} describes \emph{off-shell} Bethe vectors. The \emph{on-shell} case is obtained from a particular specialization
of \eqref{bethe}. More precisely, we ask the variables $x_i$ in \eqref{bethe} to solve Bethe ansatz equations.

Next we discuss the construction of scalar products of Bethe vectors. In order to define such scalar products one needs to build suitable duals
of \eqref{bethe}. This analysis has been performed in \cite{Korepin_1982} and here we present only the final result. Following \cite{Korepin_1982}, off-shell scalar
products are functions $\mathscr{S}_n \colon \C^{2 n} \to \C$ defined as
\[ \label{SP}
\mathscr{S}_n (x_1 , \dots , x_n | x_1^{\mathcal{B}} , \dots , x_n^{\mathcal{B}} ) \coloneqq  \bra{0} \iPROD{1}{i}{n} \mathcal{C}(x_i) \PROD{1}{j}{n} \mathcal{B}(x_j^{\mathcal{B}}) \ket{0} \; .
\]

Following our previous discussion, the specialization of \eqref{SP} to on-shell scalar products are obtained by means of Bethe ansatz equations. In the present case 
case they read
\[
\label{BA}
\frac{\phi_1}{\phi_2} \prod_{i=1}^L \frac{a(\lambda_k - \mu_i)}{b(\lambda_k - \mu_i)} \prod_{\substack{j=1 \\ j \neq k}}^n \frac{a(\lambda_j - \lambda_k)}{a(\lambda_k - \lambda_j)} = (-1)^{n-1} 
\]
for $\phi_1, \phi_2 \in \C^{\times}$. Now let $\Omega = \bigcup_{\lambda \in \Lambda(\Omega)} \Omega[\lambda]$ be the solutions manifold of Eq. \eqref{BA}.
The submanifold $\Omega[\lambda]$ of dimension $n$ describes a particular solution and $\Lambda(\Omega)$ classifies all solutions
of the Bethe ansatz equations \eqref{BA}. It is worth remarking that a complete and rigorous classification of such solutions is to date an open problem.
In this way, on-shell scalar products $\mathcal{S}_n \colon \C^{n} \to \C$ are defined as the specialization
\[
\label{onSP}
\mathcal{S}_n (x_1 , \dots , x_n) \coloneqq \left. \mathscr{S}_n (x_1 , \dots , x_n | x_1^{\mathcal{B}} , \dots , x_n^{\mathcal{B}} )   \right|_{x_i^{\mathcal{B}} \in \Omega[\lambda]} \; .
\]
Strictly speaking one should also add a label $\lambda$ to the LHS of \eqref{onSP} but we will refrain to do so for the sake of simplicity. 
In what follows we address the problem of evaluating \eqref{onSP}.

%%%%%%%%%%%%%%%%%%%%%%%%%%%%%%%%%%%%%%%%%%
\subsection{Functional equation} \label{sec:FZ_twist}
%%%%%%%%%%%%%%%%%%%%%%%%%%%%%%%%%%%%%%%%%%%
In the work \cite{Galleas_SCP} we have presented functional equations determining the scalar product \eqref{SP}. Among the results of \cite{Galleas_SCP} we
have also shown that the specialization \eqref{onSP} satisfy a simpler functional relation exhibiting the same structure as the equation
satisfied by the partition function of the six-vertex model with domain-wall boundaries \cite{Galleas_2013, Galleas_proc}. 
This simplified equation has also been studied in \cite{Galleas_Lamers_2015} and it is given as follows.

\begin{defi}[Sets of variables]
Write $X \coloneqq \{ x_1 , x_2 , \dots , x_n \}$ and introduce the symbol 
\[
X_{\alpha_1 , \alpha_2 , \dots , \alpha_p}^{\beta_1 , \beta_2 , \dots , \beta_q} \coloneqq X \cup \{x_{\beta_1} , x_{\beta_2} , \dots , x_{\beta_q} \} \backslash \{x_{\alpha_1} , x_{\alpha_2} , \dots , x_{\alpha_p} \}
\]
for $p,q \in \Z_{\geq 0}$. 
\end{defi}

\begin{thm} \label{funEQ}
On-shell scalar products $\mathcal{S}_n$ satisfy the functional relation
\[
\label{FS}
\sum_{i=0}^n K_i \; \mathcal{S}_n (X_i^0) = 0 
\]
with coefficients reading
\<
\label{coeff}
K_0 &\coloneqq& \phi_1 \prod_{j=1}^L a(x_0 - \mu_j) \left[ \prod_{k=1}^n \frac{a(x_k - x_0)}{b(x_k - x_0)} -  \prod_{k=1}^n \frac{a(x_k^{\mathcal{B}} - x_0)}{b(x_k^{\mathcal{B}} - x_0)}  \right] \nonumber \\
&& + \; \phi_2 \prod_{j=1}^L b(x_0 - \mu_j) \left[ \prod_{k=1}^n \frac{a(x_0 - x_k)}{b(x_0 - x_k)} -  \prod_{k=1}^n \frac{a(x_0 - x_k^{\mathcal{B}})}{b(x_0 - x_k^{\mathcal{B}})}  \right] \nonumber \\[5mm]
K_{i} &\coloneqq& \phi_1 \frac{c(x_0 - x_i)}{b(x_0 - x_i)} \prod_{j=1}^L a(x_i - \mu_j) \prod_{\substack{k=1 \\ k \neq i}}^n \frac{a(x_k - x_i)}{b(x_k - x_i)} \nonumber \\
&& + \; \phi_2 \frac{c(x_i - x_0)}{b(x_i - x_0)} \prod_{j=1}^L b(x_i - \mu_j) \prod_{\substack{k=1 \\ k \neq i}}^n \frac{a(x_i - x_k)}{b(x_i - x_k)} \qquad \qquad i = 1, 2, \dots , n  \; .
\> 
\end{thm}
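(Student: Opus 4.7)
The plan is to evaluate the auxiliary correlator
\[
\mathscr{F}(x_0) \coloneqq \bra{0} \iPROD{1}{i}{n} \mathcal{C}(x_i) \bigl[ \phi_1 \mathcal{A}(x_0) + \phi_2 \mathcal{D}(x_0) \bigr] \PROD{1}{j}{n} \mathcal{B}(x_j^{\mathcal{B}}) \ket{0}
\]
in two different ways and to equate the results. First, let the bracketed twisted transfer matrix act to the right: since $\{ x_j^{\mathcal{B}} \}$ solves \eqref{BA}, the Bethe vector $\PROD{1}{j}{n} \mathcal{B}(x_j^{\mathcal{B}}) \ket{0}$ is a common eigenvector of $\phi_1 \mathcal{A}(x_0) + \phi_2 \mathcal{D}(x_0)$ with the standard transfer-matrix eigenvalue
\[
\Lambda(x_0) = \phi_1 \prod_{j=1}^L a(x_0 - \mu_j) \prod_{k=1}^n \frac{a(x_k^{\mathcal{B}} - x_0)}{b(x_k^{\mathcal{B}} - x_0)} + \phi_2 \prod_{j=1}^L b(x_0 - \mu_j) \prod_{k=1}^n \frac{a(x_0 - x_k^{\mathcal{B}})}{b(x_0 - x_k^{\mathcal{B}})} ,
\]
so that $\mathscr{F}(x_0) = \Lambda(x_0) \, \mathcal{S}_n(X)$. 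I note that $\mathscr{F}$ is entire in $x_0$ since the monodromy matrix \eqref{mono} is entire, a fact that will be used below to control spurious poles.

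Second, I would push $\mathcal{A}(x_0)$ and $\mathcal{D}(x_0)$ leftwards through $\iPROD{1}{i}{n} \mathcal{C}(x_i)$ by repeated use of the Yang-Baxter commutation relations extracted from \eqref{yba}. Each elementary step produces a \emph{diagonal} piece (arguments unchanged) and an \emph{exchange} piece that swaps $x_0 \leftrightarrow x_i$. When the diagonal operators eventually reach $\bra{0}$, they are absorbed as $\lambda_{\mathcal{A}}(x) = \prod_j a(x - \mu_j)$ and $\lambda_{\mathcal{D}}(x) = \prod_j b(x - \mu_j)$, read off from \eqref{mono} and \eqref{rmat}. Exploiting the mutual commutativity of the $\mathcal{C}$'s to rearrange each single-exchange term so that $\mathcal{C}(x_0)$ occupies the former position of $\mathcal{C}(x_i)$, this evaluation yields an identity of the form
\[
\mathscr{F}(x_0) = W(x_0) \, \mathcal{S}_n(X) + \sum_{i=1}^n \widetilde{K}_i(x_0) \, \mathcal{S}_n(X_i^0),
\]
with $W(x_0)$ the purely diagonal coefficient and $\widetilde{K}_i$ the single-exchange ones. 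Matching $W(x_0) - \Lambda(x_0)$ against $K_0$ and $\widetilde{K}_i$ against $K_i$ in \eqref{coeff} then produces \eqref{FS}.

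The hard part is the second step: naively pushing two operators through $n$ factors generates $O(2^n)$ terms, most involving multiple exchanges, and one must show that they collapse onto the displayed single-swap form. The analytic lever that makes this tractable is the entirety of $\mathscr{F}(x_0)$ juxtaposed with the apparent simple poles of individual raw terms at $x_0 = x_i$: imposing cancellation of these spurious poles reduces the bookkeeping to residue identities. Equivalently, and as carried out in \cite{Galleas_SCP, Galleas_Lamers_2015}, one can bypass the full commutator expansion altogether and derive \eqref{FS} from the residues of $\mathscr{F}(x_0)$ at $x_0 = x_i$ (where the first evaluation has no pole while the second manifestly does), yielding the stated functional equation in a single stroke.
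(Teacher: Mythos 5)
Your two-sided evaluation of the twisted transfer matrix $\phi_1\mathcal{A}(x_0)+\phi_2\mathcal{D}(x_0)$ sandwiched between the dual state and the on-shell Bethe vector is precisely the mechanism behind the paper's proof, which is deferred to \cite{Galleas_SCP, Galleas_Lamers_2015} and described as the linear combination $\phi_1\times(\mathcal{A}\text{-relation})+\phi_2\times(\mathcal{D}\text{-relation})$ of the off-shell functional equations, specialized through the Bethe equations \eqref{BA}. The one caveat is your closing claim that \eqref{FS} follows ``in a single stroke'' from the residues at $x_0=x_i$: entirety of $\mathscr{F}$ only forces the two-term residue identities, so recovering the full relation still requires the complete single-swap expansion (via the standard reordering argument for the unwanted terms) or additional analytic input such as the $x_0\to\infty$ asymptotics.
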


The proof of Theorem \ref{funEQ} can be found in \cite{Galleas_SCP, Galleas_Lamers_2015} and it follows from a linear combination of the equations derived in \cite{Galleas_SCP}.

%%%%%%%%%%%%%%%%%%%%%%%%%%%%%%%%%%%%%%%%%%
\subsection{Determinantal formula} \label{sec:DET_twist}
%%%%%%%%%%%%%%%%%%%%%%%%%%%%%%%%%%%%%%%%%%%

In order to solve Eq. \eqref{FS} we shall employ the method recently put forward in \cite{Galleas_2016a, Galleas_2016b}.
For that we first need to inspect the behavior of \eqref{FS} under the action of the symmetric group. 

\begin{defi}
Let ${\rm Sym} (X^0)$ denote the symmetric group of degree $n+1$ on the set $X^0$ and let $\pi_{i,j} \in {\rm Sym} (X^0)$ 
be a $2$-cycle acting as the permutation of variables $x_i \leftrightarrow x_j$. Also, let $\text{Fun} (\C^{n+1})$ denote the space
of meromorphic functions on $\C^{n+1}$. The action of $ \pi_{i,j} \in {\rm Sym} (X^0)$ on $f \in \text{Fun} (\C^{n+1})$  is then given by
\[
( \pi_{i,j} f ) (\dots , x_i , \dots , x_j , \dots ) \coloneqq f(\dots , x_j , \dots , x_i , \dots ) \; .
\]
\end{defi}

One important feature of Eq. \eqref{FS} is that it depends on $n+1$ variables, although the scalar product \eqref{onSP} depends only on
$n$ variables. In particular, one can readily notice the distinguished role played by the variable $x_0$ by inspecting the action of $\pi_{i,j}$ 
on \eqref{FS}. More precisely, we find that the action of $\pi_{i,j}$ for $1 \leq i, j \leq n$ leaves our equation invariant, whereas
$\pi_{0,j}$ produces a new equation with same structure but modified coefficients. The fundamental idea of the method of \cite{Galleas_2016a, Galleas_2016b}
is to use this asymmetry in our favor.

\begin{lem} \label{l0n}
Eq. \eqref{FS} extends to the following system of linear equations,
\[ \label{FSl}
\sum_{i=0}^n K_i^{(l)} \; \mathcal{S}_n (X_i^0) = 0 \; , 
\]
with coefficients defined as
\< \label{coeffl}
K_i^{(l)} \coloneqq \begin{cases}
\pi_{0, l} K_l \qquad \qquad i = 0 \\
\pi_{0, l} K_0 \qquad \qquad i = l \\
\pi_{0, l} K_i \qquad \qquad \text{otherwise}
\end{cases} \; .
\>
\end{lem}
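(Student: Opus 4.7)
The plan rests on two ingredients: the symmetry of $\mathcal{S}_n$ in its $n$ variables, and the fact that \eqref{FS} is an identity in the free parameters $x_0,x_1,\dots,x_n$, so it remains valid after any substitution of those parameters.

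First, I would establish that $\mathcal{S}_n(x_1,\dots,x_n)$ is symmetric in $x_1,\dots,x_n$. This is a direct consequence of the Yang--Baxter algebra \eqref{yba}, which yields the commutation relation $[\mathcal{C}(x_i),\mathcal{C}(x_j)]=0$; the ordered product of $\mathcal{C}$-operators in the definition \eqref{SP} of $\mathscr{S}_n$ is therefore insensitive to reorderings of its arguments. Symmetry is preserved by the specialization \eqref{onSP}, hence $\mathcal{S}_n(X_i^0)$ depends only on the unordered subset $X_i^0\subset\{x_0,x_1,\dots,x_n\}$.

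Second, I would apply the two-cycle $\pi_{0,l}$ to \eqref{FS}. Since the equation holds identically for all values of the $x_k$, the resulting relation
\[
\sum_{i=0}^{n}\bigl(\pi_{0,l}K_i\bigr)\bigl(\pi_{0,l}\mathcal{S}_n(X_i^0)\bigr)=0
\]
is also valid. The bookkeeping step is then to track how $\pi_{0,l}$ permutes the arguments of $\mathcal{S}_n$. Because $X_0^0=X$ contains $x_l$ but not $x_0$, the swap $x_0\leftrightarrow x_l$ sends $X_0^0$ to $X_l^0$; symmetrically $X_l^0\mapsto X_0^0$; and for $i\notin\{0,l\}$ both $x_0$ and $x_l$ lie in $X_i^0$, so this set is fixed. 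Combining these with the symmetry established above yields
\[
\pi_{0,l}\mathcal{S}_n(X_0^0)=\mathcal{S}_n(X_l^0),\quad \pi_{0,l}\mathcal{S}_n(X_l^0)=\mathcal{S}_n(X_0^0),\quad \pi_{0,l}\mathcal{S}_n(X_i^0)=\mathcal{S}_n(X_i^0)\text{ for }i\neq 0,l.
\]

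Finally, I would substitute these identities into the permuted equation and collect the coefficient of each $\mathcal{S}_n(X_i^0)$: the coefficient of $\mathcal{S}_n(X_0^0)$ becomes $\pi_{0,l}K_l$, that of $\mathcal{S}_n(X_l^0)$ becomes $\pi_{0,l}K_0$, and that of $\mathcal{S}_n(X_i^0)$ for $i\neq 0,l$ is $\pi_{0,l}K_i$, precisely reproducing the definitions \eqref{coeffl}. The argument is essentially mechanical; the only point requiring attention is that the Bethe roots $x_k^{\mathcal{B}}$ and the inhomogeneities $\mu_j$ appearing in \eqref{coeff} are untouched by $\pi_{0,l}$, which by construction acts only on the free variables $x_0,\dots,x_n$.
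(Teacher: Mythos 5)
Your proposal is correct and follows exactly the route of the paper's (one-line) proof: apply the transposition $\pi_{0,l}$ to Eq.~\eqref{FS} and use the symmetry of $\mathcal{S}_n$ to relabel the arguments, which produces the coefficients \eqref{coeffl}. Your additional bookkeeping --- tracing $X_0^0\mapsto X_l^0$, $X_l^0\mapsto X_0^0$, $X_i^0\mapsto X_i^0$ otherwise, and noting that $\pi_{0,l}$ leaves the $x_k^{\mathcal{B}}$ and $\mu_j$ untouched --- is exactly the content the paper compresses into the phrase ``straightforward application of $\pi_{0,l}$ keeping in mind that $\mathcal{S}_n$ is a symmetric function.''
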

\begin{proof}
Straightforward application of $\pi_{0, l}$ on Eq. \eqref{FS} keeping in mind that $\mathcal{S}_n$ is a symmetric function.
\end{proof}

\begin{rema} \label{rm1}
The index $l$ in \eqref{FSl} runs from $0$ to $n$; hence, the system of equations stated in Lemma \ref{l0n} totals $n+1$ equations.
The case $l=0$ is regarded as our original equation. Also, each equation in \eqref{FSl} contains $n+1$ terms of the form $\mathcal{S}_n (X_i^0)$ 
and one can verify that $\text{det} \left( K_i^{(l)} \right)_{0 \leq i, l \leq n} = 0$. The latter property ensures the existence of non-trivial solutions.
\end{rema}

Interestingly, the system of equations \eqref{FSl} can be solved through rather elementary methods. This is one important aspect
of having $\mathcal{S}_n$ described by linear equations. In particular, we shall obtain single determinantal representations for
the scalar product $\mathcal{S}_n$ by using \emph{linear algebra} and \emph{separation of variables}.

\begin{thm} \label{T:twist}
The on-shell scalar product $\mathcal{S}_n$ can be written as
\< \label{SNdet}
\mathcal{S}_n (X) &=&  \phi_1^{-n} \; {\rm det} (\mathcal{V})  \prod_{j=1}^n \frac{b(x_0 - x_j)}{b(x_j^{\mathcal{B}} - x_0)} \prod_{k=1}^L b(x_j^{\mathcal{B}} - \mu_k) \nonumber \\
\mathcal{S}_n (X_i^0) &=& \phi_1^{-n} \; {\rm det} (\mathcal{V}_i)  \prod_{j=1}^n \frac{b(x_0 - x_j)}{b(x_j^{\mathcal{B}} - x_0)} \prod_{k=1}^L b(x_j^{\mathcal{B}} - \mu_k) \; , \nonumber \\
\>
where $\mathcal{V}$ and $\mathcal{V}_i$ are matrices of dimension $n \times n$. Their entries read
\< \label{entries}
\mathcal{V}_{\alpha , \beta} &=& \begin{cases}
\displaystyle \phi_1 \prod_{j=1}^L a(x_{\alpha} - \mu_j) \left[ \prod_{\substack{k=0 \\ k \neq \alpha}}^n \frac{a(x_k - x_{\alpha})}{b(x_k - x_{\alpha})} -  \prod_{k=1}^n \frac{a(x_k^{\mathcal{B}} - x_{\alpha})}{b(x_k^{\mathcal{B}} - x_{\alpha})}  \right] \nonumber \\
\displaystyle \qquad + \; \phi_2 \prod_{j=1}^L b(x_{\alpha} - \mu_j) \left[ \prod_{\substack{k=0 \\ k \neq \alpha}}^n \frac{a(x_{\alpha} - x_k)}{b(x_{\alpha} - x_k)} -  \prod_{k=1}^n \frac{a(x_{\alpha} - x_k^{\mathcal{B}})}{b(x_{\alpha} - x_k^{\mathcal{B}})}  \right] \qquad \beta = \alpha \nonumber \\[10mm]
\displaystyle \phi_1 \frac{c(x_{\alpha} - x_{\beta})}{b(x_{\alpha} - x_{\beta})} \prod_{j=1}^L a(x_{\beta} - \mu_j) \prod_{\substack{k=0 \\ k \neq \alpha , \beta}}^n \frac{a(x_k - x_{\beta})}{b(x_k - x_{\beta})} \nonumber \\
\displaystyle \qquad + \; \phi_2 \frac{c(x_{\beta} - x_{\alpha})}{b(x_{\beta} - x_{\alpha})} \prod_{j=1}^L b(x_{\beta} - \mu_j)  \prod_{\substack{k=0 \\ k \neq \alpha, \beta}}^n \frac{a(x_{\beta} - x_k)}{b(x_{\beta} - x_k)} \hfill \mbox{otherwise}
\end{cases} \\
\>
\<
(\mathcal{V}_i)_{\alpha , \beta} &=& \begin{cases}
\displaystyle - \phi_1 \frac{c(x_{\alpha} - x_{0})}{b(x_{\alpha} - x_{0})} \prod_{j=1}^L a(x_{0} - \mu_j) \prod_{\substack{k=1 \\ k \neq \alpha}}^n \frac{a(x_k - x_{0})}{b(x_k - x_{0})} \nonumber \\
\displaystyle \qquad - \; \phi_2 \frac{c(x_{0} - x_{\alpha})}{b(x_{0} - x_{\alpha})} \prod_{j=1}^L b(x_{0} - \mu_j)  \prod_{\substack{k=1 \\ k \neq \alpha}}^n \frac{a(x_{0} - x_k)}{b(x_{0} - x_k)} \hfill \beta = i \nonumber \\[10mm]
\displaystyle \phi_1 \prod_{j=1}^L a(x_{\alpha} - \mu_j) \left[ \prod_{\substack{k=0 \\ k \neq \alpha}}^n \frac{a(x_k - x_{\alpha})}{b(x_k - x_{\alpha})} -  \prod_{k=1}^n \frac{a(x_k^{\mathcal{B}} - x_{\alpha})}{b(x_k^{\mathcal{B}} - x_{\alpha})}  \right] \nonumber \\
\displaystyle \qquad + \; \phi_2 \prod_{j=1}^L b(x_{\alpha} - \mu_j) \left[ \prod_{\substack{k=0 \\ k \neq \alpha}}^n \frac{a(x_{\alpha} - x_k)}{b(x_{\alpha} - x_k)} -  \prod_{k=1}^n \frac{a(x_{\alpha} - x_k^{\mathcal{B}})}{b(x_{\alpha} - x_k^{\mathcal{B}})}  \right] \qquad \beta \neq i , \beta = \alpha \nonumber \\[10mm]
\displaystyle \phi_1 \frac{c(x_{\alpha} - x_{\beta})}{b(x_{\alpha} - x_{\beta})} \prod_{j=1}^L a(x_{\beta} - \mu_j) \prod_{\substack{k=0 \\ k \neq \alpha , \beta}}^n \frac{a(x_k - x_{\beta})}{b(x_k - x_{\beta})} \nonumber \\
\displaystyle \qquad + \; \phi_2 \frac{c(x_{\beta} - x_{\alpha})}{b(x_{\beta} - x_{\alpha})} \prod_{j=1}^L b(x_{\beta} - \mu_j)  \prod_{\substack{k=0 \\ k \neq \alpha, \beta}}^n \frac{a(x_{\beta} - x_k)}{b(x_{\beta} - x_k)} \hfill \beta \neq i , \alpha
\end{cases} \\
\>
\end{thm}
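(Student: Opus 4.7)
The plan is to treat the augmented system of Lemma \ref{l0n} as a linear system of equations for the family $\{\mathcal{S}_n(X_i^0)\}_{i=0}^n$, to solve it by Cramer's rule, and then to fix the overall normalization by exploiting the free auxiliary variable $x_0$.

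First, I would regard the $n+1$ relations in \eqref{FSl} as a homogeneous linear system in the $n+1$ unknowns $\mathcal{S}_n(X_0^0), \mathcal{S}_n(X_1^0), \ldots, \mathcal{S}_n(X_n^0)$, noting that $X_0^0 = X$. By Remark \ref{rm1} the full coefficient matrix has vanishing determinant, so its nullspace is non-trivial. Moving the $\mathcal{S}_n(X)$-term to the right-hand side for $l = 1, \ldots, n$ yields the $n \times n$ inhomogeneous system
\[
\sum_{i=1}^n K_i^{(l)} \, \mathcal{S}_n(X_i^0) \;=\; -\,K_0^{(l)} \, \mathcal{S}_n(X), \qquad l = 1, \ldots, n.
\]
A direct comparison of \eqref{entries} with \eqref{coeffl}, using \eqref{coeff} and the action of the transposition $\pi_{0, l}$, identifies the coefficient matrix above with $\mathcal{V}$, while the column vector $-(K_0^{(l)})_{l=1, \ldots, n}$ matches precisely the entries prescribed for the $i$-th column of $\mathcal{V}_i$. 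Cramer's rule then gives
\[
\mathcal{S}_n(X_i^0) \;=\; \frac{\det \mathcal{V}_i}{\det \mathcal{V}} \, \mathcal{S}_n(X),
\]
which is the ratio content of Theorem \ref{T:twist}.

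Next I would determine $\mathcal{S}_n(X)$ itself by separation of variables in $x_0$. The point is that the left-hand side of \eqref{SNdet} is independent of $x_0$ while the right-hand side is manifestly $x_0$-dependent. I would first show that $\det(\mathcal{V}) \prod_{j=1}^n b(x_0 - x_j) / b(x_j^{\mathcal{B}} - x_0)$ is in fact $x_0$-independent, by verifying that the apparent simple poles of $\det \mathcal{V}$ at $x_0 = x_k$ cancel against the zeros of $\prod_j b(x_0 - x_j)$, and that the would-be poles at $x_0 = x_k^{\mathcal{B}}$ introduced by the denominator are removed thanks to a compensating zero of $\det \mathcal{V}$ arising from the Bethe equations \eqref{BA}. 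Then I would fix the residual constant $\phi_1^{-n} \prod_{j, k} b(x_j^{\mathcal{B}} - \mu_k)$ by passing to a convenient limit, for instance $x_0 \to \infty$, in which the off-diagonal ratios $c/b$ vanish and $\mathcal{V}$ becomes diagonal, so that $\det \mathcal{V}$ reduces to a product that can be matched to a known scalar-product formula.

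The main obstacle is the $x_0$-independence step: one has to show that the $x_0$-dependence in the determinant exactly conspires with that of the explicit prefactor to cancel. Once this is established, the determinantal expression for $\mathcal{S}_n(X_i^0)$ follows immediately from the ratio formula, since the $x_0$-dependent prefactor is common to both. The delicate cancellations at the Bethe roots rest entirely on the on-shell condition \eqref{BA}, which is precisely the feature singling out the on-shell case from the more involved off-shell one.
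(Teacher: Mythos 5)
Your first step --- selecting the $n$ equations of \eqref{FSl} with $1\le l\le n$, treating them as an inhomogeneous linear system for $\mathcal{S}_n(X_1^0),\dots,\mathcal{S}_n(X_n^0)$ with the $\mathcal{S}_n(X)$-term moved to the right-hand side, and applying Cramer's rule to get $\mathcal{S}_n(X_i^0)=\mathcal{S}_n(X)\,\det\mathcal{V}_i/\det\mathcal{V}$ with $\mathcal{V}_{\alpha,\beta}=K_{\beta}^{(\alpha)}$ --- is exactly the paper's argument. The divergence, and the gaps, lie in how you fix the prefactor.

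First, showing that $\det(\mathcal{V})\prod_{j}b(x_0-x_j)/b(x_j^{\mathcal{B}}-x_0)$ has no poles in $x_0$ does not yet make it $x_0$-independent (for hyperbolic entries you still need boundedness together with $\ii\pi$-periodicity and Liouville); and, more seriously, even granting $x_0$-independence you have only established $\det\mathcal{V}/\mathcal{S}_n(X)=g(x_1,\dots,x_n)\prod_j b(x_j^{\mathcal{B}}-x_0)/b(x_0-x_j)$ for an \emph{unknown function} $g$ of the remaining $n$ variables. Nothing in your argument forces $g$ to be the constant $\phi_1^{n}/\prod_{j,k}b(x_j^{\mathcal{B}}-\mu_k)$. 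The paper closes precisely this hole by using all the companion relations at once: writing $\det\mathcal{V}=f\,\mathcal{S}_n(X)$ and $\det\mathcal{V}_i=f\,\mathcal{S}_n(X_i^0)$ with a common $f$, the $x_0$-independence of $\mathcal{S}_n(X)$ gives $\partial_0(f/\det\mathcal{V})=0$ while the $x_i$-independence of $\mathcal{S}_n(X_i^0)$ gives $\partial_i(f/\det\mathcal{V}_i)=0$ for every $i$; it is this full system that pins $f$ down to $\omega\prod_j b(x_0-x_j^{\mathcal{B}})/b(x_0-x_j)$ with $\omega$ a genuine constant. Second, your proposed normalization step fails concretely: as $x_0\to\infty$ the matrix $\mathcal{V}$ does \emph{not} become diagonal. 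The ratios $c(x_{\alpha}-x_{\beta})/b(x_{\alpha}-x_{\beta})$ in the off-diagonal entries involve only $x_{\alpha}-x_{\beta}$ with $1\le\alpha,\beta\le n$ and are untouched by $x_0$; the variable $x_0$ enters those entries only through the $k=0$ factor $a(x_0-x_{\beta})/b(x_0-x_{\beta})$ of the product, which tends to $e^{\pm\gamma}\neq 0$. Moreover, matching against ``a known scalar-product formula'' would import Slavnov's determinant as an external input, which defeats the constructive character of the method; the paper instead fixes $\omega$ from the asymptotic behavior of $\mathcal{S}_n$ derived in \cite{Galleas_SCP}.
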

\begin{proof}
Among the $n+1$ equations contained in \eqref{FSl} we select the ones in the range $1 \leq l \leq n$. This totals $n$ equations, and 
as pointed out in Remark \ref{rm1}, the system \eqref{FSl} contains $n+1$ terms of the form $\mathcal{S}_n (X_i^0)$. 
Hence, we can use these $n$ equations $(1 \leq l \leq n)$ to express each component $\mathcal{S}_n (X_i^0)$ for $1 \leq i \leq n$ 
in terms of $\mathcal{S}_n (X)$. This procedure can be algorithmically implemented using Cramer's rule. By doing so we find
\[ \label{S2S}
\mathcal{S}_n (X_i^0) = \frac{{\rm det} (\mathcal{V}_i) }{{\rm det} (\mathcal{V})} \mathcal{S}_n (X) 
\]
with matrices $\mathcal{V}$ and $\mathcal{V}_i$ defined in terms of the coefficients $K_i^{(l)}$. More precisely, the entries
of such matrices are given by
\< \label{VVI}
\mathcal{V}_{\alpha , \beta} &\coloneqq& K_{\beta}^{(\alpha)} \qquad 1 \leq \alpha , \beta \leq n \\
(\mathcal{V}_i)_{\alpha , \beta} &\coloneqq& \begin{cases}
- K_{0}^{(\alpha)} \qquad \beta = i \\
K_{\beta}^{(\alpha)} \qquad \mbox{otherwise}
\end{cases} \; .
\>

Now, as discussed in \cite{Galleas_2016b}, Eq. \eqref{S2S} can be regarded as an one-variable functional equation, i.e. $\mathcal{S}_n (x_0) \sim \mathcal{S}_n (x_i)$,
and since \eqref{S2S} holds for all $i \in \{ 1, 2 , \dots , n \}$ we can conclude that
\[ \label{sov}
{\rm det} (\mathcal{V}) =  \mathcal{S}_n (X) f(x_0, x_1 , \dots , x_n) \qquad \mbox{and} \qquad  {\rm det} (\mathcal{V}_i) =  \mathcal{S}_n (X_i^0) f(x_0, x_1 , \dots , x_n) 
\]
for a given function $f$. The proof of \eqref{SNdet} then follows from the determination of the function $f$ in \eqref{sov}.

We proceed by noticing that the decompositions \eqref{sov} implies a system of partial differential equations. More precisely, we have
$\partial_{0} ( f / {\rm det}(\mathcal{V}) ) = 0$ and $\partial_{i} ( f / {\rm det}(\mathcal{V}_i) ) = 0$ where
$\partial_{\alpha} \coloneqq \partial / \partial x_{\alpha}$. Such equations can be easily solved and we find
\[ \label{f}
f(x_0 , x_1 , \dots , x_n) = \omega \prod_{j=1}^n \frac{b(x_0 - x_j^{\mathcal{B}})}{b(x_0 - x_j)} \; , 
\]
with $\omega$ being an integration constant. The latter can be fixed from the asymptotic behavior presented in \cite{Galleas_SCP}
and we obtain 
\[ \label{omega}
\omega =  \frac{(-1)^n \phi_1^n}{ \displaystyle \prod_{k=1}^n \prod_{l=1}^L b(x_k^{\mathcal{B}} - \mu_l)} \; .
\]
The substitution of \eqref{coeff}, \eqref{coeffl}, \eqref{VVI}, \eqref{f} and \eqref{omega} in \eqref{sov} concludes our proof.
\end{proof}

\begin{rema}
The first formula in \eqref{SNdet} can be absorbed by the second by extending the index $i$ to the value $i=0$; while keeping in mind the entries
$(\mathcal{V}_0)_{\alpha , \beta}$ are defined for $1 \leq \alpha, \beta \leq n$. However, we prefer to keep both formulae in \eqref{SNdet}
separated for the sake of clarity.
\end{rema}

Some comments regarding the determinantal representations in \eqref{SNdet} are in order. For instance, the first formula expresses the scalar
product $\mathcal{S}_n (X)$ in terms of the determinant of the matrix $\mathcal{V}$ whose entries depend explicitly on the variable $x_0$. This dependence is highly non-trivial and its explicit form can be seen from \eqref{entries}. Nevertheless,
the evaluation of the RHS leaves us with an expression depending only on $X$. In this way, the dependence on $x_0$ in the determinantal
representation for $\mathcal{S}_n (X)$ is \emph{local} but not \emph{global}, which allows one to fix $x_0$ at convenience. Hence,
we regard \eqref{SNdet} as a continuous determinantal representations parameterized by this extra variable. It is also worth remarking
that this feature is the same mechanism allowing for the evaluation of path integrals through the \emph{localization method} 
\cite{Duistermaat_Heckman_1982}. 

On the other hand, the second formula in \eqref{SNdet} expresses $\mathcal{S}_n (X_i^0)$ in terms of the determinant of the matrix $\mathcal{V}_i$. 
Both formulae can be compared through the identification $\pi_{0,i} \mathcal{S}_n (X_i^0) =  \mathcal{S}_n (X)$; however, we have not found a 
relation between the determinants ${\rm det} ( \pi_{0,i} \mathcal{V}_i )$ and ${\rm det} ( \mathcal{V} )$. This suggests formulae \eqref{SNdet} indeed consist of independent
representations. Moreover, under this assumption \eqref{SNdet}  totals $n+1$ families of continuous determinantal representations for the scalar
product $\mathcal{S}_n$.

%%%%%%%%%%%%%%%%%%%%%%%%%%%%%%%%%%%%%%%%%%
\section{Six-vertex model: open boundaries} \label{sec:OPEN}
%%%%%%%%%%%%%%%%%%%%%%%%%%%%%%%%%%%%%%%%%%%

In \Secref{sec:TWIST} we have studied on-shell scalar products of Bethe vectors in the six-vertex model with toroidal boundary conditions
by means of functional equations. Here we intend to extend that study to the case with open boundary conditions.
The starting point of our analysis is the construction of Bethe vectors and, interestingly, the series of developments leading to the algebraic 
construction of Bethe vectors in the six-vertex model with open boundaries was analogous to the toroidal case. 
The history began with the generalization of the Bethe ansatz for quantum hamiltonians with open boundary conditions in
the works \cite{Gaudin_1971, Gaudin_book, Alcaraz_1987}. Subsequently, the scattering theory (on the half-line) was studied by
Cherednik \cite{Cherednik_1984} and the condition of factorized scattering led to the formulation of the so called 
\emph{reflection equations}. The latter was put forward as an analogous of the Yang-Baxter equation governing scattering
at the boundaries. The interplay between these results was then clarified with the formulation of the \emph{Boundary Quantum Inverse Scattering Method} 
by Sklyanin in \cite{Sklyanin_1988}. One of the outcomes of Sklyanin's formulation is the algebraic construction of Bethe vectors in the six-vertex
model with open boundary conditions. In what follows we shall briefly review such construction and define scalar products of interest.

%%%%%%%%%%%%%%%%%%%%%%%%%%%%%%%%%%%%%%%%%%
\subsection{Definitions and conventions} \label{sec:DEF_open}
%%%%%%%%%%%%%%%%%%%%%%%%%%%%%%%%%%%%%%%%%%%

The formulation of the Boundary Quantum Inverse Scattering Method, as described by Sklyanin in \cite{Sklyanin_1988}, requires a few more ingredients
in addition to the ones already described in \Secref{sec:TWIST}. In this section we give a brief account of those extra ingredients and, for more details,
we refer the reader to the original work \cite{Sklyanin_1988}. Here the conventions of \cite{Galleas_openSCP} are also extensively used.

\subsection{Reflection matrices} Following \cite{Cherednik_1984, Sklyanin_1988} we introduce \emph{reflection matrices} characterizing the system's 
interactions at the boundaries. More precisely, let $\mathcal{K}, \bar{\mathcal{K}} \colon \C \to \text{End} (\V)$ be respectively the reflection matrix 
and its dual. Integrability in the sense of Baxter then requires such matrices to satisfy the \emph{reflection equations}. In contrast to the 
Yang-Baxter equation governing the $\mathcal{R}$-matrix \eqref{rmat}, which is a relation in $\text{End} (\V \otimes \V \otimes \V)$, 
the reflection equations are relations in $\text{End} (\V \otimes \V)$ reading
\< \label{k1}
&&\mathcal{R}_{12} (x_1 - x_2) \mathcal{K}_1 (x_1) \mathcal{R}_{12} (x_1 + x_2) \mathcal{K}_2 (x_2) \nonumber \\
&&\qquad = \mathcal{K}_2 (x_2) \mathcal{R}_{12} (x_1 + x_2) \mathcal{K}_1 (x_1) \mathcal{R}_{12} (x_1 - x_2) \\[5mm] \label{k2}
&&\mathcal{R}_{12} (- x_1 + x_2) \bar{\mathcal{K}}_1^{t_1} (x_1) \mathcal{R}_{12} (- x_1 - x_2 - 2\gamma) \bar{\mathcal{K}}_2^{t_2} (x_2) \nonumber \\
&&\qquad = \bar{\mathcal{K}}_2^{t_2} (x_2) \mathcal{R}_{12} (- x_1 - x_2 - 2\gamma) \bar{\mathcal{K}}_1^{t_1} (x_1) \mathcal{R}_{12} (- x_1 + x_2) \; . 
\>
In \eqref{k1} and \eqref{k2} we use the standard tensor leg notation and $t_i$ stands for the matrix transposition in the $i$-th
space of  $\V \otimes \V$. The matrix $\mathcal{R}$ enters as input in the reflection equations and a general solution of
\eqref{k1} and \eqref{k2} associated with \eqref{rmat} was obtained in \cite{Ghoshal_Zamolodchikov_1994}.
Here we shall consider a particular solution of \eqref{k1} and \eqref{k2}, namely
\<
\mathcal{K} (x) = \begin{pmatrix} \sinh{(h+x)} & 0 \\ 0 & \sinh{(h-x)} \end{pmatrix}  , \quad  \bar{\mathcal{K}} (x) = \begin{pmatrix} \sinh{(\bar{h} - x - \gamma)} & 0 \\ 0 & \sinh{(\bar{h} + x + \gamma)} \end{pmatrix}  \; , \nonumber \\
\>
with fixed parameters $h, \bar{h} \in \C$.

\subsection{Reflection algebra} It turns out one can also formulate an associative algebra, similar to the Yang-Baxter algebra,
suitable for describing systems with open boundaries conditions. Let $\mathscr{T} (\mathcal{R})$ denote the algebra generated by the 
non-commutative matrix entries of $\mathscr{L} \in \text{End} (\V)$ satisfying
\< \label{RA}
&&\mathcal{R}_{12} (x_1 - x_2) \mathscr{L}_1 (x_1) \mathcal{R}_{12} (x_1 + x_2) \mathscr{L}_2 (x_2) \nonumber \\
&&\qquad = \mathscr{L}_2 (x_2) \mathscr{L}_{12} (x_1 + x_2) \mathscr{L}_1 (x_1) \mathcal{R}_{12} (x_1 - x_2) \; .
\>
The relation \eqref{RA} lives in $\text{End} (\V \otimes \V \otimes \V_{\mathcal{Q}})$ and the matrix $\mathcal{R}$ plays the role of 
structure constant for $\mathscr{T} (\mathcal{R})$. In particular, here $\V \simeq \C^2$ and we write
\[ \label{abcdr}
\mathscr{L} (x) \eqqcolon \begin{pmatrix} \widetilde{\mathcal{A}}(x) & \widetilde{\mathcal{B}}(x) \\ \widetilde{\mathcal{C}}(x) & \widetilde{\mathcal{D}}(x) \end{pmatrix}
\]
with operator-valued elements $\widetilde{\mathcal{A}}, \widetilde{\mathcal{B}}, \widetilde{\mathcal{C}}, \widetilde{\mathcal{D}} \colon \C \to \text{End} (\V_{\mathcal{Q}})$. We refer to $\mathscr{T} (\mathcal{R})$ as \emph{reflection algebra}.

\subsection{Modules over $\mathscr{T} (\mathcal{R})$} Let the pair $(\V_{\mathcal{Q}} , \mathscr{L})$ be an $\mathscr{T} (\mathcal{R})$-module. In particular,
due to the reflection equation \eqref{k1} one can readily notice that for $\V_{\mathcal{Q}} = \C$ we can take $\mathscr{L} (x) = \mathcal{K} (x)$.
However, in order to construct Bethe vectors we shall need representations of the reflection algebra, or $\mathscr{T} (\mathcal{R})$-modules, with
$\V_{\mathcal{Q}} = \V^{\otimes L}$. This is given by the following theorem.

\begin{thm}[Sklyanin] The pair $(\V^{\otimes L}, \widetilde{\mathcal{T}}_0 )$ is a $\mathscr{T} (\mathcal{R})$-module with
\[ \label{dmono}
\widetilde{\mathcal{T}}_0 (x) \coloneqq \bar{\mathcal{K}}_0 (x) \iPROD{1}{j}{L} \mathcal{R}_{0 j} (x - \mu_j) \; \mathcal{K}_0 (x) \PROD{1}{j}{L} \mathcal{R}_{0 j} (x + \mu_j) \;\; \in {\rm End} (\V_0 \otimes \V^{\otimes L})  \; .
\]
\end{thm}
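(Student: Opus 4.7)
The plan is to verify the theorem by direct computation, showing that $\widetilde{\mathcal{T}}_0(x)$ satisfies the defining relation \eqref{RA} of the reflection algebra $\mathscr{T}(\mathcal{R})$. The proof combines four ingredients: the Yang-Baxter equation \eqref{yba} for the R-matrix, the reflection equation \eqref{k1} for $\mathcal{K}$, the dual reflection equation \eqref{k2} for $\bar{\mathcal{K}}$, and the crossing-unitarity $\mathcal{R}_{12}(x)\mathcal{R}_{12}(-x) = [\sinh^2\gamma - \sinh^2 x]\,\mathrm{Id}$ satisfied by the matrix \eqref{rmat}.

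First I would decompose $\widetilde{\mathcal{T}}_0(x) = \bar{\mathcal{K}}_0(x)\,\mathcal{T}_0^-(x)\,\mathcal{K}_0(x)\,\mathcal{T}_0^+(x)$, with the auxiliary monodromies $\mathcal{T}_0^+(x) \coloneqq \PROD{1}{j}{L} \mathcal{R}_{0j}(x+\mu_j)$ and $\mathcal{T}_0^-(x) \coloneqq \iPROD{1}{j}{L} \mathcal{R}_{0j}(x-\mu_j)$. A straightforward induction on $L$, applying \eqref{yba} rung-by-rung and invoking crossing-unitarity to reconcile the sign-flipped arguments, yields the intertwining relations
\begin{align*}
\mathcal{R}_{12}(x_1-x_2)\,\mathcal{T}_1^{\pm}(x_1)\,\mathcal{T}_2^{\pm}(x_2) &= \mathcal{T}_2^{\pm}(x_2)\,\mathcal{T}_1^{\pm}(x_1)\,\mathcal{R}_{12}(x_1-x_2),\\
\mathcal{R}_{12}(x_1+x_2)\,\mathcal{T}_1^{-}(x_1)\,\mathcal{T}_2^{+}(x_2) &= \mathcal{T}_2^{+}(x_2)\,\mathcal{T}_1^{-}(x_1)\,\mathcal{R}_{12}(x_1+x_2).
\end{align*}
The mixed relation in the second line is the critical one: it enables the R-matrix at the crossed argument $x_1+x_2$ to traverse the incoming and outgoing halves of the double-row monodromy consistently.

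With these intertwiners in hand, the core of the proof is the classical Sklyanin migration. I would substitute the factored form into the LHS of \eqref{RA} and progressively transport $\mathcal{R}_{12}(x_1+x_2)$ rightward, absorbing the block $\mathcal{K}_1\mathcal{T}_1^+$ using the first intertwiner and then $\bar{\mathcal{K}}_2\mathcal{T}_2^-$ using the second. When $\mathcal{R}_{12}(x_1+x_2)$ reaches the inner pair $\mathcal{K}_1\mathcal{K}_2$ in the middle of the expression, \eqref{k1} fires and swaps their order while converting the trapped R-matrix into its outward counterpart; a symmetric manoeuvre on the outer side, invoking a transposed form of \eqref{k2} together with the crossing symmetry $\mathcal{R}_{12}^{t_1 t_2}(x) = \mathcal{R}_{12}(x)$ of \eqref{rmat}, performs the analogous swap on the pair $\bar{\mathcal{K}}_1\bar{\mathcal{K}}_2$. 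Reassembling all factors produces the RHS of \eqref{RA}.

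The main obstacle is the bookkeeping: four operator blocks---two of them products over the quantum sites with opposite ordering conventions---must be traversed by two R-matrices at arguments $x_1 \pm x_2$ in a tightly coordinated sequence so that \eqref{k1} and \eqref{k2} fire at precisely the right moments. The mixed intertwiner relies on crossing-unitarity and is the subtlest of the three preparatory identities; without it there is no way to commute $\mathcal{T}^-$ past $\mathcal{T}^+$ at the crossed argument. Once these mechanics are in place, each individual step of the migration reduces to a single application of \eqref{yba}, \eqref{k1}, or \eqref{k2}, and the theorem follows by composition.
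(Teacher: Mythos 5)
The paper offers no proof of this theorem: it is attributed to Sklyanin and the reader is referred to \cite{Sklyanin_1988}, so there is no internal argument to compare yours against. Your overall strategy (factor the double-row monodromy, establish exchange relations for the two halves, then migrate the $\mathcal{R}$-matrices through the product using \eqref{yba} and \eqref{k1}) is indeed the classical route. However, there are two genuine gaps. First, your mixed intertwining relation is misstated. Writing $\mathcal{T}^{-}_0(x)\propto\bigl[\mathcal{T}^{+}_0(-x)\bigr]^{-1}$ via the unitarity $\mathcal{R}_{12}(x)\mathcal{R}_{12}(-x)=(\sinh^2\gamma-\sinh^2 x)\,\mathrm{Id}$ (this is plain unitarity, not crossing-unitarity), the identity that actually follows from \eqref{yba} is the \emph{sandwiched} exchange relation
\[
\mathcal{T}^{-}_1(x_1)\,\mathcal{R}_{12}(x_1+x_2)\,\mathcal{T}^{+}_2(x_2)=\mathcal{T}^{+}_2(x_2)\,\mathcal{R}_{12}(x_1+x_2)\,\mathcal{T}^{-}_1(x_1)\;,
\]
in which the $\mathcal{R}$-matrix stays in the middle and the two monodromy blocks trade places around it. Your version, where $\mathcal{R}_{12}(x_1+x_2)$ passes through the ordered product $\mathcal{T}^{-}_1\mathcal{T}^{+}_2$ as a whole, is not an instance of the Yang--Baxter equation and does not follow from it: from $ABC=CBA$ one cannot deduce $ABC^{-1}=C^{-1}BA$, which is in effect what your relation asserts after replacing $\mathcal{R}_{2j}(-x_2-\mu_j)$ by its inverse. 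The sandwiched form is also precisely what the Sklyanin migration consumes, since in the expression $\cdots\mathcal{T}^{+}_1(x_1)\,\mathcal{R}_{12}(x_1+x_2)\,\mathcal{T}^{-}_2(x_2)\cdots$ the crossed $\mathcal{R}$-matrix sits \emph{between} the two blocks to be exchanged.

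Second, your treatment of the dual reflection matrices is too optimistic. With $\bar{\mathcal{K}}_0(x)$ prepended as in \eqref{dmono}, the migration stalls immediately: $\mathcal{R}_{12}(x_1-x_2)$ must first get past $\bar{\mathcal{K}}_1(x_1)$ and $\mathcal{R}_{12}(x_1+x_2)$ past $\bar{\mathcal{K}}_2(x_2)$, and neither commutes with the relevant $\mathcal{R}$-matrix for a generic diagonal $\bar{\mathcal{K}}$. The dual reflection equation \eqref{k2} involves transpositions and arguments shifted by $2\gamma$, and it provides the required exchange only under the trace defining the transfer matrix (where cyclicity and crossing are available), not at the level of the monodromy itself. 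The object that satisfies \eqref{RA} on the nose is the inner block $\mathcal{T}^{-}_0(x)\,\mathcal{K}_0(x)\,\mathcal{T}^{+}_0(x)$, for which your migration (with the corrected mixed relation, the RTT relations for both halves, and one application of \eqref{k1}) goes through cleanly. You should therefore either prove the reflection-algebra relation for that inner block and discuss the role of $\bar{\mathcal{K}}_0$ separately, or state explicitly which modified exchange relation the full $\widetilde{\mathcal{T}}_0$ of \eqref{dmono} is claimed to satisfy; as it stands, the "symmetric manoeuvre on the outer side" is not a step that can be carried out.
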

Therefore, representations of the operators $\widetilde{\mathcal{A}}, \widetilde{\mathcal{B}}, \widetilde{\mathcal{C}}, \widetilde{\mathcal{D}} \in \text{End}(\V^{\otimes})$ are obtained through the identification of \eqref{abcdr}
and \eqref{dmono}.

\subsection{Bethe vectors and scalar products} As far as $\widetilde{\mathcal{T}}_0$ built out of \eqref{rmat}, \eqref{k1} and \eqref{k2} is concerned, the
construction of Bethe vectors in the six-vertex model with open boundaries follows the same lines presented in \Secref{sec:TWIST}. 
Let $\ket{0}$ be the highest-weight vector defined in \eqref{zero}. Then following \cite{Sklyanin_1988}, off-shell Bethe vectors $\ket{\Phi_n} \in \text{span}(\V^{\otimes L})$ 
are defined as
\[ \label{bethe_open}
\ket{\Phi_n} \coloneqq \PROD{1}{i}{n} \widetilde{\mathcal{B}} (x_i) \ket{0} \qquad x_i \in \C \; .
\]
In the present paper we are interested in on-shell scalar products and that is obtained by settling \eqref{bethe_open} on the manifold generated by
appropriated Bethe ansatz equations. In our case such equations read
\< \label{BAopen}
\frac{b(\lambda_k + h) b(\lambda_k - \bar{h})}{a(\lambda_k - h) a(\lambda_k + \bar{h})} \prod_{i=1}^L \frac{a(\lambda_k - \mu_i) a(\lambda_k + \mu_i)  }{b(\lambda_k - \mu_i) b(\lambda_k + \mu_i) } \prod_{\substack{j=1 \\ j \neq k}}^n \frac{a(\lambda_j - \lambda_k) b(\lambda_j + \lambda_k)}{a(\lambda_k - \lambda_j) a(\lambda_k + \lambda_j + \gamma)} = (-1)^{n-1} \; . \nonumber \\ 
\>
Now let $\Gamma = \bigcup_{\lambda \in \Upsilon(\Gamma)} \Gamma[\lambda]$ be the manifold constituted by submanifolds $\Gamma[\lambda]$
describing particular solutions of \eqref{BAopen}. In its turn, $\Upsilon(\Gamma)$ characterizes all sets of solutions. On-shell Bethe vectors
are then the specialization $\left. \ket{\Phi_n} \right|_{x_i \in \Gamma[\lambda]}$.

We stress again that here we are considering \eqref{k1} and \eqref{k2}; and in that case the formulation of scalar products of Bethe vectors 
is not significantly different from the one presented in \Secref{sec:TWIST}. Essentially, one only needs to consider
\eqref{abcdr} instead of \eqref{abcd}. Hence, we write 
\[ \label{openSP}
\widetilde{\mathscr{S}}_n (x_1 , \dots , x_n | x_1^{\widetilde{\mathcal{B}}} , \dots , x_n^{\widetilde{\mathcal{B}}} ) \coloneqq  \bra{0} \iPROD{1}{i}{n} \widetilde{\mathcal{C}}(x_i) \PROD{1}{j}{n} \widetilde{\mathcal{B}}(x_j^{\widetilde{\mathcal{B}}}) \ket{0} \; 
\]
for off-shell scalar products. Definition \eqref{openSP} describes a multivariate function $\widetilde{\mathscr{S}}_n \colon \C^{2n} \to \C$
and, similarly to the case with toroidal boundaries, we define on-shell scalar products as a particular specialization of \eqref{openSP}. More precisely,
on-shell scalar products in the six-vertex model with open boundaries are defined as
\[
\label{onopenSP}
\widetilde{\mathcal{S}}_n (x_1 , \dots , x_n) \coloneqq \left. \widetilde{\mathscr{S}}_n (x_1 , \dots , x_n | x_1^{\widetilde{\mathcal{B}}} , \dots , x_n^{\widetilde{\mathcal{B}}} )   \right|_{x_i^{\widetilde{\mathcal{B}}} \in \Gamma[\lambda]} \; .
\]
Hence, $\widetilde{\mathcal{S}}_n \colon \C^n \to \C$ and we shall discuss the evaluation of $\widetilde{\mathcal{S}}_n$ in the following subsection.

%%%%%%%%%%%%%%%%%%%%%%%%%%%%%%%%%%%%%%%%%%
\subsection{Functional equation} \label{sec:FZ_open}
%%%%%%%%%%%%%%%%%%%%%%%%%%%%%%%%%%%%%%%%%%%
In the works \cite{Galleas_openSCP, Galleas_Lamers_2015} we have discussed a functional equation solved by the on-shell scalar product \eqref{onopenSP}.
This equation originates from the reflection algebra \eqref{RA} and here we intend to use it to obtain explicit representations for $\widetilde{\mathcal{S}}_n$.
The derivation of such equation within the algebraic-functional framework can be found in \cite{Galleas_openSCP} and here we restrict ourselves
to presenting only the final result.

\begin{thm} \label{funEQ_open}
The scalar product $\widetilde{\mathcal{S}}_n$ satisfies the functional relation
\[
\label{FS_open}
\sum_{i=0}^n \widetilde{K}_i \; \widetilde{\mathcal{S}}_n (X_i^0) = 0 
\]
with coefficients defined as
\< \label{coeff_open}
\widetilde{K}_0 &\coloneqq&  b(x_0+h)b(x_0-\bar h) \frac{a(2 x_0+\gamma)}{b(2 x_0+\gamma)} \prod_{j=1}^L a(x_0 - \mu_j)a(x_0 + \mu_j) \nonumber \\ 
&& \qquad \times \left[ \prod_{k=1}^n \frac{a(x_k - x_0)}{b(x_k - x_0)}\frac{b(x_k + x_0)}{a(x_k + x_0)} -  \prod_{k=1}^n \frac{a(x_k^{\widetilde{\mathcal{B}}} - x_0)}{b(x_k^{\widetilde{\mathcal{B}}} - x_0)}\frac{b(x_k^{\widetilde{\mathcal{B}}} + x_0)}{a(x_k^{\widetilde{\mathcal{B}}} + x_0)} \right]  \nonumber \\
&& \;\;\; + \; a(x_0-h)a(x_0+\bar{h}) \frac{b(2 x_0)}{a(2 x_0)} \prod_{j=1}^L b(x_0 - \mu_j)b(x_0 + \mu_j) \nonumber \\ 
&& \qquad \times \left[ \prod_{k=1}^n \frac{a(x_0 - x_k)}{b(x_0 - x_k)}\frac{a(x_0 + x_k + \gamma)}{b(x_0 + x_k + \gamma)} -  \prod_{k=1}^n \frac{a(x_0 - x_k^{\widetilde{\mathcal{B}}})}{b(x_0 - x_k^{\widetilde{\mathcal{B}}})}\frac{a(x_0 + x_k^{\widetilde{\mathcal{B}}} + \gamma)}{b(x_0 + x_k^{\widetilde{\mathcal{B}}} + \gamma)}  \right] \nonumber \\[5mm]
\widetilde{K}_{i} &\coloneqq& \frac{a(2 x_0 +\gamma)}{a(x_0 + x_i)}\frac{c(x_0 - x_i)}{b(x_0 - x_i)} \frac{b(2 x_i)}{a(2 x_i)} \nonumber \\ 
&& \times \left[ b(x_i + h)b(x_i - \bar{h}) \prod_{j=1}^L a(x_i - \mu_j) a(x_i + \mu_j) \prod_{\substack{k=1 \\ k \neq i}}^n \frac{a(x_k - x_i)}{b(x_k - x_i)}\frac{b(x_k + x_i)}{a(x_k + x_i)} \right. \nonumber \\
&& \qquad \left. - \; a(x_i - h) a(x_i + \bar{h}) \prod_{j=1}^L b(x_i - \mu_j) b(x_i + \mu_j) \prod_{\substack{k=1 \\ k \neq i}}^n \frac{a(x_i - x_k)}{b(x_i - x_k)}\frac{a(x_i + x_k + \gamma)}{b(x_i + x_k + \gamma)} \right]  \nonumber \\
&& \qquad\qquad\qquad\qquad\qquad\qquad\qquad\qquad\qquad\qquad\qquad\qquad\qquad\qquad  i = 1, 2, \dots , n \; .	
\> 
\end{thm}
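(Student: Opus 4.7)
The plan is to carry over, to the reflection-algebra setting, the same algebraic-functional strategy that produces \eqref{FS} in the toroidal case. The starting point is the augmented off-shell object
\[
\widetilde{\mathscr{F}}(x_0 | X | X^{\widetilde{\mathcal{B}}}) \coloneqq \bra{0}\, \widetilde{\mathcal{C}}(x_0) \iPROD{1}{i}{n} \widetilde{\mathcal{C}}(x_i) \PROD{1}{j}{n} \widetilde{\mathcal{B}}(x_j^{\widetilde{\mathcal{B}}}) \ket{0}
\]
with one extra spectral parameter $x_0$ adjoined on the $\widetilde{\mathcal{C}}$-side. Using the reflection algebra \eqref{RA} one extracts the $\widetilde{\mathcal{C}}\widetilde{\mathcal{B}}$ commutation rule, which, unlike in the Yang--Baxter case, is of the Sklyanin form $\widetilde{\mathcal{C}}(x_0)\widetilde{\mathcal{B}}(y)=\alpha(x_0,y)\widetilde{\mathcal{B}}(y)\widetilde{\mathcal{C}}(x_0)+\beta(x_0,y)\bigl[\widetilde{\mathcal{A}}(x_0),\widetilde{\mathcal{D}}(x_0)\text{ terms}\bigr]+\widetilde{\beta}(x_0,y)\bigl[\widetilde{\mathcal{A}}(y),\widetilde{\mathcal{D}}(y)\text{ terms}\bigr]$. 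Moving $\widetilde{\mathcal{C}}(x_0)$ all the way through the $\widetilde{\mathcal{B}}$-string and then evaluating on $\ket{0}$ yields one decomposition of $\widetilde{\mathscr{F}}$; a dual decomposition is obtained by treating, symmetrically, one of the $\widetilde{\mathcal{C}}(x_i)$ for $i\ge 1$ as the displaced variable, which is legitimate because $\widetilde{\mathscr{F}}$ is symmetric in $X^0 = X\cup\{x_0\}$.

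The coefficients in \eqref{coeff_open} are then identified step by step. The bare eigenvalues of $\widetilde{\mathcal{A}}(x)$ and $\widetilde{\mathcal{D}}(x)$ on the pseudo-vacuum \eqref{zero}, computed directly from \eqref{dmono} together with the chosen diagonal $\mathcal{K}$ and $\bar{\mathcal{K}}$, supply the prefactors $b(x+h)b(x-\bar h)\prod_j a(x-\mu_j)a(x+\mu_j)$ and $a(x-h)a(x+\bar h)\prod_j b(x-\mu_j)b(x+\mu_j)$ that appear in both $\widetilde{K}_0$ and $\widetilde{K}_i$. The bracketed differences $\bigl[\prod_k \cdots - \prod_k \cdots\bigr]$ in $\widetilde{K}_0$ arise from comparing the ``wanted'' channels of the two decompositions, one producing $X$-products and the other $X^{\widetilde{\mathcal{B}}}$-products. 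The boundary factors $a(2x_0+\gamma)/b(2x_0+\gamma)$ and $b(2x_0)/a(2x_0)$ together with the cross-term $a(2x_0+\gamma)/a(x_0+x_i)$ in $\widetilde{K}_i$ come from the $\mathcal{K}$-sandwich in \eqref{dmono} and from the Sklyanin quantum determinant required to disentangle $\widetilde{\mathcal{A}}$ from $\widetilde{\mathcal{D}}$ in the reflection algebra. Finally, the on-shell specialization \eqref{BAopen} is invoked: it converts the $X^{\widetilde{\mathcal{B}}}$-product inside each $\widetilde{K}_i$ with $i\ge 1$ into the $X$-products that leave the characteristic difference structure only in $\widetilde{K}_0$, matching the stated form.

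The main obstacle, in my view, is bookkeeping. The reflection-algebra commutation relations couple $\widetilde{\mathcal{A}}$ and $\widetilde{\mathcal{D}}$ simultaneously in every binary move, so the operatorial expansion rapidly proliferates into crossed factors with arguments $x_0\pm x_i$ and $x_0+x_i+\gamma$ that must eventually recombine into the compact $b(2x_i)/a(2x_i)$ structure visible in $\widetilde{K}_i$. The delicate point is to verify that the two decompositions of $\widetilde{\mathscr{F}}$ agree in the ``fully-commuted'' channel and differ only in a controlled way in the other channels, so that their subtraction yields a \emph{homogeneous} linear relation in $\widetilde{\mathcal{S}}_n$ rather than an inhomogeneous one. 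Once this is established, the identification with \eqref{FS_open} follows by the same linear-combination argument that produces \eqref{FS} from the two off-shell functional equations of \cite{Galleas_SCP}, applied here to the pair of off-shell equations derived in \cite{Galleas_openSCP}, after which Bethe ansatz equations \eqref{BAopen} collapse the coefficients to \eqref{coeff_open}.
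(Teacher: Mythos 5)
First, be aware that the paper does not actually prove Theorem \ref{funEQ_open}: it defers entirely to \cite{Galleas_openSCP, Galleas_Lamers_2015}, stating only that \eqref{FS_open} arises as a particular linear combination of the off-shell functional relations derived there, subsequently collapsed by the Bethe ansatz equations \eqref{BAopen}. Your overall template --- two decompositions of an augmented correlator, a linear combination, then the on-shell specialization --- matches that derivation in spirit, and your reading of the prefactors in \eqref{coeff_open} as the pseudo-vacuum eigenvalues of $\widetilde{\mathcal{A}}$ and $\widetilde{\mathcal{D}}$ computed from \eqref{dmono} is correct. However, your concrete starting object does not work. The quantity $\widetilde{\mathscr{F}} = \bra{0}\,\widetilde{\mathcal{C}}(x_0) \iPROD{1}{i}{n} \widetilde{\mathcal{C}}(x_i) \PROD{1}{j}{n} \widetilde{\mathcal{B}}(x_j^{\widetilde{\mathcal{B}}}) \ket{0}$ contains $n+1$ weight-raising and only $n$ weight-lowering generators, so it vanishes identically by $\mathfrak{h}$-weight conservation. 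Worse, the quadratic relations \eqref{RA} built from the ice-type $\mathcal{R}$-matrix \eqref{rmat} are weight-homogeneous (reordering $\widetilde{\mathcal{C}}\widetilde{\mathcal{B}}$ produces $\widetilde{\mathcal{B}}\widetilde{\mathcal{C}}$ plus terms quadratic in $\widetilde{\mathcal{A}},\widetilde{\mathcal{D}}$, all of the same $\mathfrak{h}$-weight), so \emph{every individual term} generated by commuting $\widetilde{\mathcal{C}}(x_0)$ through the $\widetilde{\mathcal{B}}$-string is again a vacuum expectation of a weight-raising monomial and vanishes term by term. Your two decompositions therefore each read $0=0$, and no nontrivial relation among the $\widetilde{\mathcal{S}}_n(X_i^0)$ can be extracted.

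The derivation the paper points to instead inserts the \emph{diagonal} generators between the two strings: one evaluates $\bra{0} \iPROD{1}{i}{n} \widetilde{\mathcal{C}}(x_i)\, \widetilde{\mathcal{A}}(x_0)\, \PROD{1}{j}{n} \widetilde{\mathcal{B}}(x_j^{\widetilde{\mathcal{B}}}) \ket{0}$ and its $\widetilde{\mathcal{D}}$ counterpart in two ways, pushing the diagonal operator leftwards through the $\widetilde{\mathcal{C}}$-string or rightwards through the $\widetilde{\mathcal{B}}$-string. These insertions are weight-neutral, so the expansions are nontrivial; the ``wanted'' channels from the two directions both multiply $\widetilde{\mathscr{S}}_n(x_1,\dots,x_n|x_1^{\widetilde{\mathcal{B}}},\dots,x_n^{\widetilde{\mathcal{B}}})$ and their mismatch is exactly the bracketed difference in $\widetilde{K}_0$, while the unwanted channels on the $\widetilde{\mathcal{C}}$-side supply the $\widetilde{K}_i$. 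The presence of both vacuum eigenvalues in every coefficient reflects that the two equations must be combined (essentially via the double-row transfer matrix) before going on shell. Finally, your account of the role of \eqref{BAopen} is also inaccurate: the Bethe equations do not ``convert'' $X^{\widetilde{\mathcal{B}}}$-products inside $\widetilde{K}_i$ (for $i\geq 1$ these coefficients contain no $x_k^{\widetilde{\mathcal{B}}}$ at all); rather, they annihilate the coefficients of the terms in which $x_0$ displaces one of the $\widetilde{\mathcal{B}}$-side rapidities, and it is this cancellation that lets the relation close on the replacements $\widetilde{\mathcal{S}}_n(X_i^0)$ alone.
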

The proof of Theorem \ref{funEQ_open} can be found in \cite{Galleas_Lamers_2015}. Similarly to the toroidal case whose functional equation
for on-shell scalar products is given in Theorem \ref{funEQ}, Eq. \eqref{FS_open} also follows from a particular linear combination
of functional relations satisfied by off-shell scalar products. Next we shall discuss the resolution
of \eqref{FS_open} in terms of determinants.

%%%%%%%%%%%%%%%%%%%%%%%%%%%%%%%%%%%%%%%%%%
\subsection{Determinantal formula} \label{sec:DET_open}
%%%%%%%%%%%%%%%%%%%%%%%%%%%%%%%%%%%%%%%%%%%
One particularly important feature of the algebraic-functional approach is that this method yields the same type of functional relation
for several quantities in integrable vertex models. For instance, one can readily notice that \eqref{FS} and \eqref{FS_open} only differ
by the explicit form of their coefficients. In addition to that the method recently put forward in \cite{Galleas_2016a, Galleas_2016b} 
relies mostly on the structure of the equation rather than on its particular coefficients. In this way, the solution of \eqref{FS_open} 
can be obtained straightforwardly and we present it in what follows.

\begin{lem} The action of ${\rm Sym}(X^0)$ on \eqref{FS_open} yields the system of equations
\[ \label{FSl_open}
\sum_{i=0}^n \widetilde{K}_i^{(l)} \; \widetilde{\mathcal{S}}_n (X_i^0) = 0 
\]
with coefficients defined as
\< \label{coeffl}
\widetilde{K}_i^{(l)} \coloneqq \begin{cases}
\pi_{0, l} \widetilde{K}_l \qquad \qquad i = 0 \\
\pi_{0, l} \widetilde{K}_0 \qquad \qquad i = l \\
\pi_{0, l} \widetilde{K}_i \qquad \qquad \text{otherwise}
\end{cases} \; .
\>
\end{lem}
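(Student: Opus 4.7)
The plan is to mimic the proof of Lemma \ref{l0n} from the toroidal setting almost verbatim, since the derivation there is purely formal: it uses only that the base equation is a meromorphic identity in the $n+1$ variables $x_0,x_1,\dots,x_n$ and that $\widetilde{\mathcal{S}}_n$ is a symmetric function of its $n$ arguments. The first ingredient is built into Theorem \ref{funEQ_open}, and the second follows from the commutativity $[\widetilde{\mathcal{B}}(x),\widetilde{\mathcal{B}}(y)]=0$, which is a standard consequence of the reflection algebra \eqref{RA} and is what makes the off-shell vector $\ket{\Phi_n}$ (hence its on-shell specialization) symmetric in the Bethe roots.

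First I would act on \eqref{FS_open} with the transposition $\pi_{0,l}\in \mathrm{Sym}(X^0)$. Since \eqref{FS_open} is an identity of meromorphic functions of $x_0,\dots,x_n$, applying $\pi_{0,l}$ to both sides produces another valid identity in which the $i$-th term becomes $(\pi_{0,l}\widetilde{K}_i)\cdot(\pi_{0,l}\widetilde{\mathcal{S}}_n(X_i^0))$. The next step is to simplify the factor $\pi_{0,l}\widetilde{\mathcal{S}}_n(X_i^0)$ by inspecting what the swap $x_0\leftrightarrow x_l$ does to the argument set $X_i^0 = X\cup\{x_0\}\setminus\{x_i\}$: for $i\notin\{0,l\}$ the swap leaves the set $\{x_0,\dots,x_n\}\setminus\{x_i\}$ invariant, so symmetry of $\widetilde{\mathcal{S}}_n$ gives $\pi_{0,l}\widetilde{\mathcal{S}}_n(X_i^0)=\widetilde{\mathcal{S}}_n(X_i^0)$; for $i=0$ it turns $X=X_0^0$ into $X_l^0$; and for $i=l$ it turns $X_l^0$ into $X = X_0^0$.

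Collecting the transformed terms by the scalar-product factor they multiply then recasts the identity in the form $\sum_{i=0}^n \widetilde{K}_i^{(l)}\,\widetilde{\mathcal{S}}_n(X_i^0)=0$, where the coefficient of $\widetilde{\mathcal{S}}_n(X_0^0)$ is what used to multiply $\widetilde{\mathcal{S}}_n(X_l^0)$, namely $\pi_{0,l}\widetilde{K}_l$; the coefficient of $\widetilde{\mathcal{S}}_n(X_l^0)$ is what used to multiply $\widetilde{\mathcal{S}}_n(X_0^0)$, namely $\pi_{0,l}\widetilde{K}_0$; and all other coefficients are simply $\pi_{0,l}\widetilde{K}_i$. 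This is exactly the case distinction stated for $\widetilde{K}_i^{(l)}$ in the lemma. There is really no main obstacle here beyond careful bookkeeping: the coefficients $\widetilde{K}_i$ are more elaborate than in the toroidal case — involving both the bulk data and the boundary parameters $h,\bar h$ through \eqref{coeff_open} — but at no point is their explicit form used; the argument is driven entirely by the permutation action on variable labels and by the symmetry of $\widetilde{\mathcal{S}}_n$, so no further computation is required.
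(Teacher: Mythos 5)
Your proposal is correct and follows essentially the same route as the paper: apply the transposition $\pi_{0,l}$ to \eqref{FS_open}, use the symmetry of $\widetilde{\mathcal{S}}_n$, and track how the argument sets $X_i^0$ are relabeled to read off the coefficients $\widetilde{K}_i^{(l)}$. The only difference is that you spell out the bookkeeping (and the justification of symmetry via commutativity of the $\widetilde{\mathcal{B}}$ operators) that the paper leaves implicit.
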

\begin{proof}
First notice that \eqref{FS_open} is invariant under the action of $\pi_{i,j}$ for $1 \leq i,j \leq n$. The remaining elements
of ${\rm Sym}(X^0)$ are then $\pi_{0 , l}$ $(1 \leq l \leq n)$ whose action then produces \eqref{FSl_open}. The latter
statement also uses the fact that $\widetilde{\mathcal{S}}_n$ is a symmetric function.
\end{proof}

\begin{rema}
Expression \eqref{FSl_open} represents a system of $n+1$ equations, similarly to equations \eqref{FSl} described in \Secref{sec:DET_twist}.
Also, each equation in \eqref{FSl_open} relates $n+1$ terms of the form $\mathcal{S}_n (X_i^0)$;
and the existence of non-trivial solutions is ensured by the property $\text{det} \left( \widetilde{K}_i^{(l)} \right)_{0 \leq i, l \leq n} = 0$.
\end{rema}

The extension of \eqref{FS_open} to \eqref{FSl_open} allows one to use \emph{linear algebra} methods to obtain an explicit
formula for on-shell scalar products $\widetilde{\mathcal{S}}_n$. In fact, we find several families of determinantal representations
which are stated in the following theorem.

\begin{thm} \label{T:open}
The scalar product $\widetilde{\mathcal{S}}_n$ admits the following representations:
\< \label{SNdeta_open}
\widetilde{\mathcal{S}}_n (X) &=& {\rm det} (\widetilde{\mathcal{V}}) \prod_{1 \leq j < k \leq n} \frac{a(x_j^{\widetilde{\mathcal{B}}} + x_k^{\widetilde{\mathcal{B}}} + \gamma)}{b(x_j^{\widetilde{\mathcal{B}}} + x_k^{\widetilde{\mathcal{B}}})} \nonumber \\
&& \times \prod_{j=1}^n \frac{b(x_0 - x_j) a(x_0 + x_j)}{b(x_0 - x_j^{\widetilde{\mathcal{B}}}) a(x_0 + x_j^{\widetilde{\mathcal{B}}})} \frac{b(2 x_j^{\widetilde{\mathcal{B}}})}{a(2 x_j + \gamma)} \frac{a(x_j^{\widetilde{\mathcal{B}}} - h)}{b(x_j^{\widetilde{\mathcal{B}}} - \bar{h})} \prod_{k=1}^L b(x_j^{\widetilde{\mathcal{B}}} - \mu_k) b(x_j^{\widetilde{\mathcal{B}}} + \mu_k) \; , \nonumber \\
\>
\< \label{SNdetb_open}
\widetilde{\mathcal{S}}_n (X_i^0) &=& {\rm det} (\widetilde{\mathcal{V}}_i) \prod_{1 \leq j < k \leq n} \frac{a(x_j^{\widetilde{\mathcal{B}}} + x_k^{\widetilde{\mathcal{B}}} + \gamma)}{b(x_j^{\widetilde{\mathcal{B}}} + x_k^{\widetilde{\mathcal{B}}})} \nonumber \\
&& \times \prod_{j=1}^n \frac{b(x_0 - x_j) a(x_0 + x_j)}{b(x_0 - x_j^{\widetilde{\mathcal{B}}}) a(x_0 + x_j^{\widetilde{\mathcal{B}}})} \frac{b(2 x_j^{\widetilde{\mathcal{B}}})}{a(2 x_j + \gamma)} \frac{a(x_j^{\widetilde{\mathcal{B}}} - h)}{b(x_j^{\widetilde{\mathcal{B}}} - \bar{h})} \prod_{k=1}^L b(x_j^{\widetilde{\mathcal{B}}} - \mu_k) b(x_j^{\widetilde{\mathcal{B}}} + \mu_k) \; . \nonumber \\
\>
The matrix coefficients of $\widetilde{\mathcal{V}}$ and $\widetilde{\mathcal{V}}_i$ are in their turn defined as
\< \label{entries}
&& \widetilde{\mathcal{V}}_{\alpha , \beta} = \nonumber \\
&& \begin{cases}
\displaystyle b(x_{\alpha} + h)b(x_{\alpha}-\bar h) \frac{a(2 x_{\alpha}+\gamma)}{b(2 x_{\alpha}+\gamma)} \prod_{j=1}^L a(x_{\alpha} - \mu_j)a(x_{\alpha} + \mu_j) \nonumber \\ 
\displaystyle \qquad \times \left[ \prod_{\substack{k=0 \\ k \neq \alpha}}^n \frac{a(x_k - x_{\alpha})}{b(x_k - x_{\alpha})}\frac{b(x_k + x_{\alpha})}{a(x_k + x_{\alpha})} -  \prod_{k=1}^n \frac{a(x_k^{\widetilde{\mathcal{B}}} - x_{\alpha})}{b(x_k^{\widetilde{\mathcal{B}}} - x_{\alpha})}\frac{b(x_k^{\widetilde{\mathcal{B}}} + x_{\alpha})}{a(x_k^{\widetilde{\mathcal{B}}} + x_{\alpha})} \right]  \nonumber \\
\displaystyle \;\;\; + \; a(x_{\alpha}-h)a(x_{\alpha}+\bar{h}) \frac{b(2 x_{\alpha})}{a(2 x_{\alpha})} \prod_{j=1}^L b(x_{\alpha} - \mu_j)b(x_{\alpha} + \mu_j) \nonumber \\ 
\displaystyle \qquad \qquad \times \left[ \prod_{\substack{k=0 \\ k \neq \alpha}}^n \frac{a(x_{\alpha} - x_k)}{b(x_{\alpha} - x_k)}\frac{a(x_{\alpha} + x_k + \gamma)}{b(x_{\alpha} + x_k + \gamma)} -  \prod_{k=1}^n \frac{a(x_{\alpha} - x_k^{\widetilde{\mathcal{B}}})}{b(x_{\alpha} - x_k^{\widetilde{\mathcal{B}}})}\frac{a(x_{\alpha} + x_k^{\widetilde{\mathcal{B}}} + \gamma)}{b(x_{\alpha} + x_k^{\widetilde{\mathcal{B}}} + \gamma)}  \right] \nonumber \\
\qquad\qquad\qquad\qquad\qquad\qquad\qquad\qquad\qquad\qquad\qquad\qquad\qquad\qquad\qquad \qquad \qquad    \beta = \alpha \nonumber \\[5mm]
\displaystyle \frac{a(2 x_{\alpha} +\gamma)}{a(x_{\alpha} + x_{\beta})}\frac{c(x_{\alpha} - x_{\beta})}{b(x_{\alpha} - x_{\beta})} \frac{b(2 x_{\beta})}{a(2 x_{\beta})} \nonumber \\ 
\displaystyle \times \left[ b(x_{\beta} + h)b(x_{\beta} - \bar{h}) \prod_{j=1}^L a(x_{\beta} - \mu_j) a(x_{\beta} + \mu_j) \prod_{\substack{k=0 \\ k \neq \alpha , \beta}}^n \frac{a(x_k - x_{\beta})}{b(x_k - x_{\beta})}\frac{b(x_k + x_{\beta})}{a(x_k + x_{\beta})} \right. \nonumber \\
\displaystyle \quad \left. - \; a(x_{\beta} - h) a(x_{\beta} + \bar{h}) \prod_{j=1}^L b(x_{\beta} - \mu_j) b(x_{\beta} + \mu_j) \prod_{\substack{k=0 \\ k \neq \alpha, \beta}}^n \frac{a(x_{\beta} - x_k)}{b(x_{\beta} - x_k)}\frac{a(x_{\beta} + x_k + \gamma)}{b(x_{\beta} + x_k + \gamma)} \right]  \nonumber \\
\hfill \mbox{otherwise}
\end{cases} \\
\>
\<
&& (\widetilde{\mathcal{V}}_i)_{\alpha , \beta} = \nonumber \\
&&\begin{cases}
\displaystyle \frac{a(2 x_{\alpha} +\gamma)}{a(x_{\alpha} + x_0)}\frac{c(x_{\alpha} - x_0)}{b(x_{\alpha} - x_0)} \frac{b(2 x_0)}{a(2 x_0)} \nonumber \\ 
\displaystyle \times \left[ b(x_0 + h)b(x_0 - \bar{h}) \prod_{j=1}^L a(x_0 - \mu_j) a(x_0 + \mu_j) \prod_{\substack{k=1 \\ k \neq \alpha}}^n \frac{a(x_k - x_0)}{b(x_k - x_0)}\frac{b(x_k + x_0)}{a(x_k + x_0)} \right. \nonumber \\
\displaystyle \quad \left. - \; a(x_0 - h) a(x_0 + \bar{h}) \prod_{j=1}^L b(x_0 - \mu_j) b(x_0 + \mu_j) \prod_{\substack{k=1 \\ k \neq \alpha}}^n \frac{a(x_0 - x_k)}{b(x_0 - x_k)}\frac{a(x_0 + x_k + \gamma)}{b(x_0 + x_k + \gamma)} \right]  \nonumber \\
\; \quad\qquad\qquad\qquad\qquad\qquad\qquad\qquad\qquad\qquad\qquad\qquad\qquad\qquad\qquad\qquad\qquad  \beta = i \nonumber \\[5mm]
\displaystyle b(x_{\alpha} + h)b(x_{\alpha}-\bar h) \frac{a(2 x_{\alpha}+\gamma)}{b(2 x_{\alpha}+\gamma)} \prod_{j=1}^L a(x_{\alpha} - \mu_j)a(x_{\alpha} + \mu_j) \nonumber \\ 
\displaystyle \qquad \times \left[ \prod_{\substack{k=0 \\ k \neq \alpha}}^n \frac{a(x_k - x_{\alpha})}{b(x_k - x_{\alpha})}\frac{b(x_k + x_{\alpha})}{a(x_k + x_{\alpha})} -  \prod_{k=1}^n \frac{a(x_k^{\widetilde{\mathcal{B}}} - x_{\alpha})}{b(x_k^{\widetilde{\mathcal{B}}} - x_{\alpha})}\frac{b(x_k^{\widetilde{\mathcal{B}}} + x_{\alpha})}{a(x_k^{\widetilde{\mathcal{B}}} + x_{\alpha})} \right]  \nonumber \\
\displaystyle \;\;\; + \; a(x_{\alpha}-h)a(x_{\alpha}+\bar{h}) \frac{b(2 x_{\alpha})}{a(2 x_{\alpha})} \prod_{j=1}^L b(x_{\alpha} - \mu_j)b(x_{\alpha} + \mu_j) \nonumber \\ 
\displaystyle \qquad \qquad \times \left[ \prod_{\substack{k=0 \\ k \neq \alpha}}^n \frac{a(x_{\alpha} - x_k)}{b(x_{\alpha} - x_k)}\frac{a(x_{\alpha} + x_k + \gamma)}{b(x_{\alpha} + x_k + \gamma)} -  \prod_{k=1}^n \frac{a(x_{\alpha} - x_k^{\widetilde{\mathcal{B}}})}{b(x_{\alpha} - x_k^{\widetilde{\mathcal{B}}})}\frac{a(x_{\alpha} + x_k^{\widetilde{\mathcal{B}}} + \gamma)}{b(x_{\alpha} + x_k^{\widetilde{\mathcal{B}}} + \gamma)}  \right] \nonumber \\
\qquad\qquad\qquad\qquad\qquad\qquad\qquad\qquad\qquad\qquad\qquad\qquad\qquad\qquad\qquad   \beta \neq i , \beta = \alpha \nonumber \\[5mm]
\displaystyle \frac{a(2 x_{\alpha} +\gamma)}{a(x_{\alpha} + x_{\beta})}\frac{c(x_{\alpha} - x_{\beta})}{b(x_{\alpha} - x_{\beta})} \frac{b(2 x_{\beta})}{a(2 x_{\beta})} \nonumber \\ 
\displaystyle \times \left[ b(x_{\beta} + h)b(x_{\beta} - \bar{h}) \prod_{j=1}^L a(x_{\beta} - \mu_j) a(x_{\beta} + \mu_j) \prod_{\substack{k=0 \\ k \neq \alpha , \beta}}^n \frac{a(x_k - x_{\beta})}{b(x_k - x_{\beta})}\frac{b(x_k + x_{\beta})}{a(x_k + x_{\beta})} \right. \nonumber \\
\displaystyle \quad \left. - \; a(x_{\beta} - h) a(x_{\beta} + \bar{h}) \prod_{j=1}^L b(x_{\beta} - \mu_j) b(x_{\beta} + \mu_j) \prod_{\substack{k=0 \\ k \neq \alpha, \beta}}^n \frac{a(x_{\beta} - x_k)}{b(x_{\beta} - x_k)}\frac{a(x_{\beta} + x_k + \gamma)}{b(x_{\beta} + x_k + \gamma)} \right]  \nonumber \\
\qquad\qquad\qquad\qquad\qquad\qquad\qquad\qquad\qquad\qquad\qquad\qquad\qquad\qquad\qquad\qquad  \beta \neq i , \alpha
\end{cases}
\>
\end{thm}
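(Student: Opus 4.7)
The proof strategy follows exactly the blueprint established for Theorem~\ref{T:twist}, exploiting the fact that the system \eqref{FSl_open} shares the same algebraic structure as \eqref{FSl}. The plan is to extract from \eqref{FSl_open} the $n$ equations corresponding to $1 \leq l \leq n$, regard them as a linear system of $n$ equations in the $n+1$ unknowns $\widetilde{\mathcal{S}}_n(X_0^0), \widetilde{\mathcal{S}}_n(X_1^0), \ldots, \widetilde{\mathcal{S}}_n(X_n^0)$ with $\widetilde{\mathcal{S}}_n(X_0^0) = \widetilde{\mathcal{S}}_n(X)$, and then solve for the $n$ components $\widetilde{\mathcal{S}}_n(X_i^0)$ ($1 \leq i \leq n$) in terms of $\widetilde{\mathcal{S}}_n(X)$ via Cramer's rule. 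Setting $\widetilde{\mathcal{V}}_{\alpha,\beta} \coloneqq \widetilde{K}_\beta^{(\alpha)}$ for $1 \leq \alpha, \beta \leq n$ and defining $\widetilde{\mathcal{V}}_i$ by replacing the $i$-th column of $\widetilde{\mathcal{V}}$ with the column vector whose $\alpha$-entry is $-\widetilde{K}_0^{(\alpha)}$, Cramer's rule immediately yields
\[
\widetilde{\mathcal{S}}_n(X_i^0) \;=\; \frac{\det(\widetilde{\mathcal{V}}_i)}{\det(\widetilde{\mathcal{V}})}\, \widetilde{\mathcal{S}}_n(X), \qquad 1 \leq i \leq n,
\]
which is the analogue of Eq.~\eqref{S2S} in the open setting. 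A routine check using the elements of ${\rm Sym}(X^0)$ combined with the vanishing determinant property $\det(\widetilde{K}_i^{(l)})_{0 \leq i,l \leq n}=0$ confirms both the solvability and the consistency of this system.

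The second step is the separation-of-variables argument. Since the relation above holds for every $i \in \{1, \ldots, n\}$ and the left-hand sides depend on disjoint subsets of the variables $\{x_0, x_1, \ldots, x_n\}$, there must exist a single function $\widetilde{f}(x_0, x_1, \ldots, x_n)$ such that
\[
\det(\widetilde{\mathcal{V}}) \;=\; \widetilde{\mathcal{S}}_n(X)\, \widetilde{f}, \qquad \det(\widetilde{\mathcal{V}}_i) \;=\; \widetilde{\mathcal{S}}_n(X_i^0)\, \widetilde{f}.
\]
Because $\widetilde{\mathcal{S}}_n(X)$ is independent of $x_0$ and each $\widetilde{\mathcal{S}}_n(X_i^0)$ is independent of $x_i$, this translates into a decoupled system of first-order equations $\partial_0 \log(\widetilde{f}/\det \widetilde{\mathcal{V}}) = 0$ and $\partial_i \log(\widetilde{f}/\det \widetilde{\mathcal{V}}_i) = 0$, which together pin down $\widetilde{f}$ up to an overall constant $\widetilde{\omega}$.

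The third step is to integrate these equations to recover the explicit prefactor in \eqref{SNdeta_open} and \eqref{SNdetb_open}. Here the doubled pole structure of the coefficients \eqref{coeff_open} (compared with \eqref{coeff}) is expected to produce extra contributions in both $x_j^{\widetilde{\mathcal{B}}} - x_k^{\widetilde{\mathcal{B}}}$ and $x_j^{\widetilde{\mathcal{B}}} + x_k^{\widetilde{\mathcal{B}}}$, reflecting the $\mathbb{Z}_2$ symmetry of the reflection algebra; inspecting the $x_0$-dependent poles in each $\widetilde{\mathcal{V}}_{\alpha,\beta}$ directly identifies the factor $\prod_{j=1}^n b(x_0 - x_j) a(x_0 + x_j)/[b(x_0 - x_j^{\widetilde{\mathcal{B}}}) a(x_0 + x_j^{\widetilde{\mathcal{B}}})]$. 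The remaining $x_0$-independent factors of $\widetilde{f}$ are fixed by comparing the residues at $x_\alpha = x_\beta$ of the off-diagonal entries, and the global constant $\widetilde{\omega}$ is then calibrated against the asymptotic expansion of $\widetilde{\mathcal{S}}_n$ presented in~\cite{Galleas_openSCP}, in complete analogy with the determination of $\omega$ in Eq.~\eqref{omega}.

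The main obstacle, as in the toroidal case, is the explicit determination of $\widetilde{f}$: while the PDE formulation guarantees existence and uniqueness up to a constant, producing the closed form on the right-hand side of \eqref{SNdeta_open} requires an educated ansatz combining the double poles in $x_0 \pm x_j^{\widetilde{\mathcal{B}}}$, the diagonal factors $b(2 x_j^{\widetilde{\mathcal{B}}})/a(2x_j + \gamma)$ and the boundary-dependent ratio $a(x_j^{\widetilde{\mathcal{B}}} - h)/b(x_j^{\widetilde{\mathcal{B}}} - \bar{h})$, and then verifying that this ansatz is consistent with the PDE for every $i$. All remaining manipulations are bookkeeping: the structural role played by the action of ${\rm Sym}(X^0)$ is identical to that used in the proof of Theorem~\ref{T:twist}, so no new conceptual ingredient is needed beyond the careful handling of the enlarged set of poles induced by $\mathcal{K}$ and $\bar{\mathcal{K}}$.
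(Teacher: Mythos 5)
Your proposal follows essentially the same route as the paper: the paper's own proof of Theorem \ref{T:open} simply states that it is analogous to that of Theorem \ref{T:twist} (Cramer's rule applied to the subsystem $1 \le l \le n$ of \eqref{FSl_open}, separation of variables via the relations $\partial_0 (\widetilde f / \det \widetilde{\mathcal V}) = 0$ and $\partial_i (\widetilde f / \det \widetilde{\mathcal V}_i) = 0$, and calibration of the integration constant against the asymptotic behavior of \cite{Galleas_openSCP}), which is exactly the blueprint you lay out. The only stylistic discrepancy is that you describe the determination of $\widetilde f$ as requiring an ``educated ansatz,'' whereas the paper regards the first-order system as directly integrable, consistent with its stated emphasis that the method is constructive and guess-free.
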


\begin{proof}
The proof is analogous to the proof of Theorem \ref{T:twist}. As for the last step, we use the asymptotic behavior 
derived in \cite{Galleas_openSCP}.
\end{proof}

\begin{rema}
Similarly to the representations \eqref{SNdet} obtained for the toroidal six-vertex model, formulae \eqref{SNdeta_open} and \eqref{SNdetb_open} also depends locally
on one extra variable, i.e. $x_0$, which has no global influence and can be fixed at convenience. In fact, all remarks made for \eqref{SNdet}
are also valid for \eqref{SNdeta_open} and \eqref{SNdetb_open}.
\end{rema}

%%%%%%%%%%%%%%%%%%%%%%%%%%%%%%%%%%%%%%%%%%
\section{Concluding remarks} \label{sec:CONCL}
%%%%%%%%%%%%%%%%%%%%%%%%%%%%%%%%%%%%%%%%%%%

The main results of this paper are Theorems \ref{T:twist} and \ref{T:open}. They provide continuous families of single determinantal
representations for on-shell scalar products in two variants of the six-vertex model. As far as the toroidal six-vertex model is concerned,
such scalar products have been originally computed in \cite{Slavnov_1989} where a determinantal representation was found. However, although our
work also yields determinantal representations, there are crucial differences between our results and the one of \cite{Slavnov_1989}. 
The most significant difference is that each one of our representations forms a continuous family parameterized by an additional complex variable
having no global influence. Nevertheless, expressions \eqref{entries} exhibit a highly non-trivial dependence on this extra variable and
particular specializations can modify drastically the matrix we are taking the determinant. Furthermore, Theorem \ref{T:twist} not only contains a 
single formula but a total of $n+1$ (complex) continuous representations.
On-shell scalar products of Bethe vectors in the six-vertex model with open boundaries were previously studied in \cite{Wang_2002} and
\cite{Kitanine_2007}. The comparison between the results presented in Theorem \ref{T:open} and the one of \cite{Kitanine_2007} follows the same
lines already discussed for the toroidal case; and the differences previously pointed out also hold in the case of open boundaries.

One important aspect of our analysis is the origin of such representations. They naturally appear as the solution of certain functional equations
originating from the Yang-Baxter and reflection algebras. In particular, the existence of such continuous representations are intimately associated
with the structure of the aforementioned functional equations. It is important to remark here that the relation between continuous determinantal representations
and functional equations derived within the algebraic-functional framework has been only recently demonstrated in \cite{Galleas_2016a, Galleas_2016b}.
Moreover, our approach is \emph{constructive} in the sense that one does not need to make \emph{educated guesses} in order to obtain such 
determinantal formulae. 

As far as the resolution of Eqs. \eqref{FS} and \eqref{FS_open} is concerned, some remarks are in order. For instance, our method is solely based
on linear algebra and separation of variables; and uniqueness of the solution (up to an overall multiplicative factor) follows automatically 
from the method we are employing. Moreover, within our approach we do not make any assumption concerning the nature of the solution which suggests more general six-vertex models
can be accommodated in Theorems \ref{T:twist} and \ref{T:open}. Also, the structure of the matrices $\mathcal{V}$, $\mathcal{V}_i$, 
$\widetilde{\mathcal{V}}$ and $\widetilde{\mathcal{V}}_i$ do not depend on the particular functional dependence of the coefficients
of Eqs. \eqref{FS} and \eqref{FS_open}. In fact, we have only used the explicit form of such coefficients in the determination of the 
overall multiplicative factors accompanying the determinants.
 
Our present analysis and the results presented in \cite{Galleas_2016a, Galleas_2016b} suggests that the existence of determinantal representations
is intimately associated with functional relations of type \eqref{FS} and \eqref{FS_open}. However, we have also shown in \cite{Galleas_2016b} 
for the elliptic $\mathfrak{gl}_2$ elliptic solid-on-solid model that variations of \eqref{FS} and \eqref{FS_open} also admit determinantal
solutions. As previously remarked, several other quantities to date are known to be described by functional relations of type \eqref{FS} 
and variations. Such equations have been derived through the algebraic-functional method and, for instance, we find that off-shell scalar products
of Bethe vectors are governed by a particular system of functional equations as shown in \cite{Galleas_SCP, Galleas_openSCP}. The latter equations
can be regarded as variations of \eqref{FS} and \eqref{FS_open} and one might wonder if such off-shell equations can also be solved by determinants. 
It is worth remarking that determinantal representations for off-shell scalar products of Bethe vectors are not known to date. Although
the computation of correlation functions in the Heisenberg spin-chain along the lines of \cite{Korepin_book} requires only on-shell scalar products,
the off-shell case can still be regarded as the partition function of a six-vertex model with special boundary conditions similar to the case with
domain-wall boundaries. In this way, it would be desirable having off-shell scalar products also expressed as a single determinant and the possibility
of deriving them from the equations presented in \cite{Galleas_SCP, Galleas_openSCP} deserves further attention.

%%%%%%%%%%%%%%%%%%%%%%%%%%%%%%%%%%%%%%%%%%%%%%%%%%%%%%%%
\bibliographystyle{alpha}
\bibliography{references}

\newcommand{\etalchar}[1]{$^{#1}$}
\begin{thebibliography}{KKM{\etalchar{+}}07}

\bibitem[ABB{\etalchar{+}}87]{Alcaraz_1987}
F.~C. Alcaraz, M.~N. Barber, M.~T. Batchelor, R.~J. Baxter, and G.~R.~W.
  Quispel.
\newblock {Surface exponents of the quantum $XXZ$, Ashkin-Teller and Potts
  models}.
\newblock {\em {J. Phys. A: Math. Gen.}}, {20}({18}):{6397--6409}, {1987}.

\bibitem[Bax71a]{Baxter_1971}
R.~J. Baxter.
\newblock Eight vertex model in lattice statistics.
\newblock {\em Phys. Rev. Lett.}, 26:832, 1971.

\bibitem[Bax71b]{Baxter_1971f}
R.~J. Baxter.
\newblock {Generalized ferroelectric model on a square lattice}.
\newblock {\em Stud. in Appl. Math.}, 50(1):51--69, 1971.

\bibitem[Bax07]{Baxter_book}
R.~J. Baxter.
\newblock {\em {Exactly Solved Models in Statistical Mechanics}}.
\newblock Dover Publications, Inc., Mineola, New York, 2007.

\bibitem[Bet31]{Bethe_1931}
H.~Bethe.
\newblock {Zur Theorie der Metalle I. Eigenwerte und Eigenfunktionen der
  Linearen Atomkette}.
\newblock {\em {Zeitschrift f\"ur Physik}}, ({71}):{225--226}, {1931}.

\bibitem[Blo30]{Bloch_1930}
F.~Bloch.
\newblock {Zur Theorie des Ferromagnetismus}.
\newblock {\em {Zeitschrift f\"ur Physik}}, {61}({3-4}):{206--219}, {1930}.

\bibitem[Blo32]{Bloch_1932}
F.~Bloch.
\newblock {Zur Theorie des Austauschproblems und der Remanenzerscheinung der
  Ferromagnetika}.
\newblock {\em {Zeitschrift f\"ur Physik}}, {74}({5-6}):{295--335}, {1932}.

\bibitem[Che84]{Cherednik_1984}
I.~V. Cherednik.
\newblock {Factorizing particles on a half-line and root systems}.
\newblock {\em {Theor. Math. Phys.}}, {61}({1}):{977--983}, {1984}.

\bibitem[dGGS11]{deGier_Galleas_2011}
J.~de~Gier, W.~Galleas, and M.~Sorrell.
\newblock Multiple integral formula for the off-shell six vertex scalar
  product.
\newblock arXiv:1111.3712 [hep-th], 2011.

\bibitem[DH82]{Duistermaat_Heckman_1982}
J.~J. Duistermaat and G.~J. Heckman.
\newblock On the variation in the cohomology of the symplectic form of the
  reduced phase space.
\newblock {\em Invent. Math.}, (69):259--268, 1982.

\bibitem[Foc35]{Fock_1935}
V.~Fock.
\newblock {Zur Theorie des Wasserstoffatoms}.
\newblock {\em {Zeitschrift f{\"u}r Physik}}, 98(3):145--154, 1935.

\bibitem[Gal08]{Galleas_2008}
W.~Galleas.
\newblock {Functional relations from the Yang-Baxter algebra: Eigenvalues of
  the $XXZ$ model with non-diagonal twisted and open boundary conditions}.
\newblock {\em {Nucl. Phys. B}}, {790}({3}):{524--542}, {2008}.

\bibitem[Gal10]{Galleas_2010}
W.~Galleas.
\newblock Functional relations for the six-vertex model with domain wall
  boundary conditions.
\newblock {\em J. Stat. Mech.}, 06:P06008, 2010.

\bibitem[Gal13a]{Galleas_proc}
W.~Galleas.
\newblock {Functional relations and the Yang-Baxter algebra}.
\newblock {\em {Journal of Physics: Conference Series}}, {474}:{012020},
  {2013}.

\bibitem[Gal13b]{Galleas_2013}
W.~Galleas.
\newblock {Refined functional relations for the elliptic SOS model}.
\newblock {\em {Nucl. Phys. B}}, {867}:{855--871}, {2013}.

\bibitem[Gal14]{Galleas_SCP}
W.~Galleas.
\newblock {Scalar product of Bethe vectors from functional equations}.
\newblock {\em {Comm. Math. Phys.}}, {329}({1}):{141--167}, {2014}.

\bibitem[Gal15]{Galleas_openSCP}
W.~Galleas.
\newblock {Off-shell scalar products for the $XXZ$ spin chain with open
  boundaries}.
\newblock {\em Nucl. Phys. B}, 893:346--375, {2015}.

\bibitem[Gal16a]{Galleas_2016b}
W.~Galleas.
\newblock {On the elliptic $\mathfrak{gl}_2 $ solid-on-solid model: functional
  relations and determinants}.
\newblock arXiv: 1606.06144 [math-ph], 2016.

\bibitem[Gal16b]{Galleas_2016a}
W.~Galleas.
\newblock Partition function of the elliptic solid-on-solid model as a single
  determinant.
\newblock {\em Phys. Rev. E}, 94:010102, Jul 2016.

\bibitem[Gau71]{Gaudin_1971}
M.~Gaudin.
\newblock {Boundary energy of a Bose gas in one dimension}.
\newblock {\em {Phys. Rev. A}}, {4}({1}):{386}, {1971}.

\bibitem[Gau72]{Gaudin_1972}
M.~Gaudin.
\newblock {\em Centre d'Etudes Nucleates de Saclay}, CEA-N-1559(1), 1972.

\bibitem[Gau83]{Gaudin_book}
M.~Gaudin.
\newblock {\em La fonction d'onde de Bethe}.
\newblock Masson, Paris, 1983.

\bibitem[GL15]{Galleas_Lamers_2015}
W.~Galleas and J.~Lamers.
\newblock {Differential approach to on-shell scalar products in six-vertex
  models}.
\newblock {2015}.

\bibitem[GMW81]{Gaudin_McCoy_Wu_1981}
M.~Gaudin, B.~M. McCoy, and T.~T. Wu.
\newblock {Normalization sum for the Bethe's hypothesis wave functions of the
  Heisenberg-Ising chain}.
\newblock {\em Phys. Rev. D}, 23:417--419, Jan 1981.

\bibitem[GZ94]{Ghoshal_Zamolodchikov_1994}
S.~Ghoshal and A.~Zamolodchikov.
\newblock {Boundary $S$-matrix and boundary state in $2$-dimensional integrable
  quantum-field theory}.
\newblock {\em Int. J. Mod. Phys. A}, 9(21):3841--388, 1994.

\bibitem[Hei28]{Heisenberg_1928}
W.~Heisenberg.
\newblock {Zur Theorie des Ferromagnetismus}.
\newblock {\em {Zeitschrift f\"ur Physik}}, {49}({9-10}):{619--636}, {1928}.

\bibitem[KBI93]{Korepin_book}
V.~E. Korepin, N.~M. Bogoliubov, and A.~G. Izergin.
\newblock {\em Quantum inverse scattering method and correlation functions}.
\newblock Cambridge University Press, 1993.

\bibitem[KKM{\etalchar{+}}07]{Kitanine_2007}
N.~Kitanine, K.~K. Kozlowski, J.~M. Maillet, G.~Niccoli, N.~A. Slavnov, and
  V.~Terras.
\newblock {Correlation functions of the open $XXZ$ chain: I}.
\newblock {\em {J. Stat. Mech.}}, {10}:{P10009}, {2007}.

\bibitem[Kor82]{Korepin_1982}
V.~E. Korepin.
\newblock {Calculation of norms of Bethe wave functions}.
\newblock {\em Commun. Math. Phys.}, 86:391--418, 1982.

\bibitem[Kup96]{Kuperberg_1996}
G.~Kuperberg.
\newblock {Another proof of the alternating sign matrix conjecture}.
\newblock {\em Inter. Math. Res. Notes}, 1996(3):139--150, 1996.

\bibitem[Lie67a]{Lieb_1967b}
E.~H. Lieb.
\newblock Exact solution of the $f$ model of an antiferroelectric.
\newblock {\em Phys. Rev. Lett.}, 18:1046--1048, Jun 1967.

\bibitem[Lie67b]{Lieb_1967a}
E.~H. Lieb.
\newblock Exact solution of the problem of the entropy of two-dimensional ice.
\newblock {\em Phys. Rev. Lett.}, 18:692--694, Apr 1967.

\bibitem[Lie67c]{Lieb_1967c}
E.~H. Lieb.
\newblock Exact solution of the two-dimensional slater kdp model of a
  ferroelectric.
\newblock {\em Phys. Rev. Lett.}, 19:108--110, Jul 1967.

\bibitem[Lie67d]{Lieb_1967}
E.~H. Lieb.
\newblock Residual entropy of square lattice.
\newblock {\em Phys. Rev.}, 162(1):162, 1967.

\bibitem[MW68]{McCoy_Wu_1968}
B.~M. McCoy and T.~T. Wu.
\newblock {Hydrogen-bonded crystals and anisotropic Heisenberg chain}.
\newblock {\em Nuovo Cimento B}, 56(2):311--315, 1968.

\bibitem[Pau26]{Pauli_1926}
W.~Pauli.
\newblock {\"Uber das Wasserstoffspektrum vom Standpunkt der neuen
  Quantenmechanik}.
\newblock {\em {Zeitschrift f{\"u}r Physik}}, 36(5):336--363, 1926.

\bibitem[Pau35]{Pauling_1935}
L.~Pauling.
\newblock The structure and entropy of ice and of other crystals with some
  randomness of atomic arrangement.
\newblock {\em J. Am. Chem. Soc.}, 57:2680, 1935.

\bibitem[Skl88]{Sklyanin_1988}
E.~K. Sklyanin.
\newblock {Boundary conditions for integrable quantum systems}.
\newblock {\em {J. Phys. A: Math. Gen.}}, {21}({10}):{2375--2389}, {1988}.

\bibitem[Sla89]{Slavnov_1989}
N.~A. Slavnov.
\newblock {Calculation of scalar products of wave functions and form factors in
  the framework of the algebraic Bethe ansatz}.
\newblock {\em {Theor. Math. Phys.}}, {79}({2}):{502--508}, {1989}.

\bibitem[ST12]{Szabo_2012}
R.~J. Szabo and M.~Tierz.
\newblock {Two-dimensional Yang-Mills theory, Painleve equations and the
  six-vertex model}.
\newblock {\em {J. Phys. A: Math. Theor.}}, {45}({8}), {2012}.

\bibitem[STF79]{Sk_Faddeev_1979}
E.~K. Sklyanin, L.~A. Takhtadzhyan, and L.~D. Faddeev.
\newblock {Quantum Inverse Problem Method .1.}
\newblock {\em {Theor. Math. Phys.}}, {40}({2}):{688--706}, {1979}.

\bibitem[Sut67]{Sutherland_1967}
B.~Sutherland.
\newblock {Exact Solution of a Two-Dimensional Model for Hydrogen-Bonded
  Crystals}.
\newblock {\em Phys. Rev. Lett.}, 19:103--104, 1967.

\bibitem[TF79]{Takh_Faddeev_1979}
L.~A. Takhtadzhyan and L.~D. Faddeev.
\newblock {The quantum method of the inverse problem and the Heisenberg $XYZ$
  model}.
\newblock {\em {Russ. Math. Surv.}}, {11}({34}), {1979}.

\bibitem[Wan02]{Wang_2002}
Y.~S. Wang.
\newblock {The scalar products and the norm of Bethe eigenstates for the
  boundary $XXX$ Heisenberg spin-$1/2$ finite chain}.
\newblock {\em {Nucl. Phys. B}}, {622}({3}):{633--649}, {2002}.

\bibitem[YY66]{Yang_Yang_1966}
C.~N. Yang and C.~P. Yang.
\newblock {One-Dimensional Chain of Anisotropic Spin-Spin Interactions}.
\newblock {\em {Phys. Lett.}}, {20}({1}):{9}, {1966}.

\end{thebibliography}

\end{document}